\theoremstyle{plain}
\newtheorem{thm}{Theorem}
\theoremstyle{definition}
\newtheorem{defn}[thm]{Definition}
\newtheorem{algo}[algorithm]{Algorithm}
\newcommand{\diag}[1]{\text{diag}(#1)}
\newcommand{\rank}[1]{\text{rank}(#1)}
\newcommand{\row}[1]{\text{row}(#1)}
\newcommand{\col}[1]{\text{col}(#1)}
\newcommand{\tr}[1]{\text{tr}(#1)}
\begin{document}
\title{A Numerical Study of Bravyi-Bacon-Shor and Subsystem Hypergraph Product Codes}
\author{Muyuan Li}
\email{mli97@gatech.edu}
\affiliation{School of Computational Science and Engineering,
 Georgia Institute of Technology, Atlanta, Georgia 30332, USA}
 
 \author{Theodore J. Yoder}
 \email{ted.yoder@ibm.com}
 \affiliation{IBM T. J. Watson Research Center, Yorktown Heights, NY, 10598, United States}
\date{November 2019}

\begin{abstract}
We provide a numerical investigation of two families of subsystem quantum codes that are related to hypergraph product codes by gauge-fixing. The first family consists of the Bravyi-Bacon-Shor (BBS) codes which have optimal code parameters for subsystem quantum codes local in 2-dimensions. The second family consists of the constant rate ``generalized Shor" codes of Bacon and Cassicino \cite{bacon2006quantum}, which we re-brand as subsystem hypergraph product (SHP) codes. We show that any hypergraph product code can be obtained by entangling the gauge qubits of two SHP codes. To evaluate the performance of these codes, we simulate both small and large examples. For circuit noise, a $\llbracket 21,4,3\rrbracket$ BBS code and a $\llbracket 49,16,3\rrbracket$ SHP code have pseudthresholds of $2\times10^{-3}$ and $8\times10^{-4}$, respectively. Simulations for phenomenological noise show that large BBS and SHP codes start to outperform surface codes with similar encoding rate at physical error rates $1\times 10^{-6}$ and $4\times10^{-4}$, respectively.
\end{abstract}

\maketitle

\section{Introduction}

Two-dimensional topological error-correcting codes are extremely attractive models of quantum error-correction. Structurally, low-weight stabilizers -- just weight four for the surface code and weight six for the most popular color code -- that are also local in the plane make for simple fault-tolerant syndrome measurement circuits. In turn, this simplicity leads to surprisingly high thresholds \cite{wang2011surface} compared to, say, concatenated codes \cite{aliferis2005quantum}.

On the other hand, error-correction in two dimensions is inherently limited by the Bravyi-Poulin-Terhal bound \cite{bravyi2010tradeoffs}, which states that a two-dimensional code using $N$ qubits to encode $K$ qubits with code distance $D$ must satisfy $cKD^2\le N$ for some universal constant $c$. In particular, two-dimensional codes with constant rate $K\propto N$ must have constant distance, which precludes error-correction with constant space overhead \cite{gottesman2014fault} in two dimensions.

These constraints on two-dimensional codes explains the recent surge of interest in quantum hypergraph product codes \cite{tillich2013quantum,leverrier2015quantum}, which break the plane (i.e.~are \emph{not} local in two dimensions) but in doing so achieve $K\propto N$ and $D\propto\sqrt{N}$. Given the small-set flip decoder \cite{leverrier2015quantum}, which is single-shot with an asymptotic threshold, hypergraph product codes promise quantum error-correction with constant overhead \cite{fawzi2018constant}.

However, hypergraph product codes also have a couple of undesirable properties from a practical standpoint. First, the small-set flip decoder, although theoretically satisfactory, is likely not practical due to low thresholds even when measurements are perfect \cite{grospellier2018numerical}. This is somewhat to be expected by analogy with classical expander codes, where the classical flip decoder \cite{sipser1996expander} is greatly outperformed by heuristic decoders, such as belief propagation \cite{richardson2001capacity}. It is also unclear that the small-set flip decoder works well at all on small examples suitable for near-term implementation. Second, the stabilizer weights of hypergraph product codes are relatively large, e.g.~the best performing codes in \cite{grospellier2018numerical} have stabilizers with weight 11, which necessitates a corresponding increase in fault-tolerant circuit complexity and a decrease in thresholds with respect to circuit-level noise.

Here we take an empirical look at two families of subsystem codes that, while related to hypergraph product codes, may have some advantages for near-term implementation. Because these are subsystem codes, the operators measured for error-correction are quite small -- in the cases we explore here they never exceed weight six. We also demonstrate how the powerful technique of belief propagation can be applied to decode these codes.

The first family consists of the Bravyi-Bacon-Shor (BBS) codes \cite{bravyi2012subsystem}. BBS codes achieve $K,D\propto\sqrt{N}$ with just two-body measurements and are easily modified so that these measurements are local in two dimensions. Furthermore, they can be gauge-fixed to hypergraph product codes \cite{yoder2019optimal}. The second family consists of the ``generalized Shor" codes of Bacon and Cassacino \cite{bacon2006quantum}. We rename these codes subsystem hypergraph product (SHP) codes, because we prove that any hypergraph product code is two SHP codes with their gauge qubits entangled. SHP codes can achieve $K\propto N$ and $D\propto\sqrt{N}$ just like hypergraph product codes. Compared to BBS codes, they have higher weight gauge operators, weight six in our instances.

We perform numerical experiments with these code families in two regimes of operation. In the small-code regime, we construct small, distance-3 codes in each class, develop fault-tolerant circuits for measuring their stabilizers, and calculate pseudothresholds for circuit noise. We find pseudotresholds of $2\times10^{-3}$ for a $\llbracket21,4,3\rrbracket$ BBS code and $8\times10^{-4}$ for a $\llbracket49,16,3\rrbracket$ SHP code. These results suggest that the BBS code in particular is quite a good candidate for protecting four logical qubits with a small quantum computer.

In the large-code regime, we create BBS and SHP codes from regular classical expander codes. We modify belief propagation to include measurement errors and apply it to decode these codes under an error model including data and measurement noise (but without circuit-level noise). Despite no asymptotic thresholds, compared to a single logical qubit of surface code with similar encoding rate, BBS and SHP codes do achieve better logical error rates per logical qubit provided sufficiently low physical error rates: $p<10^{-6}$ for BBS codes and $p<4\times10^{-4}$ for SHP codes.

The paper is organized as follows. In Section \ref{sec:BBS} we review the Bravyi-Bacon-Shor codes and present a circuit-level simulation of the $\llbracket21,4,3\rrbracket$ code. In Section \ref{sec:bigSHP} we look at the construction of the subsystem hypergraph product codes, find their code parameters, and present a circuit-level simulation of the $\llbracket49,16,3\rrbracket$ code. In Section \ref{sec:decode} we show how to add measurement noise to a classical belief propagation decoder so that it can then be used to decode the BBS and SHP codes. In Section \ref{sec:results} we present numerical results on large BBS and SHP codes and compare them to surface codes.





\section{Review of Bravyi-Bacon-Shor Codes}
\label{sec:BBS}
In this section, we review the Bravyi-Bacon-Shor (BBS) codes that were introduced by Bravyi \cite{bravyi2010tradeoffs} and explicitly constructed in \cite{yoder2019optimal}.

Let $\mathcal{F}_2$ denote the finite field with two elements 0,1. A Bravyi-Bacon-Shor code is defined by a binary matrix $A \in \mathcal{F}_2^{n_1 \times n_2}$, where qubits live on sites $(i,j)$ of the matrix $A$ for which $A_{i,j} = 1$. As shown in \cite{bravyi2010tradeoffs,yoder2019optimal}, given $A$ we can define two classical codes corresponding to its column-space and row-space:
\begin{align}
    \mathcal{C}_1 &= \col{A},\\
    \mathcal{C}_2 &= \row{A},
\end{align}
where $\mathcal{C}_1$ and $\mathcal{C}_2$ has code parameters $[n_1, k, d_1]$, $[n_2, k, d_2]$, generating matrices $G_1$ and $G_2$, and parity check matrices $H_1$ and $H_2$.

The notation for Pauli operators on the qubit lattice is defined as follows. A Pauli $X$- or $Z$-type operator acting on the qubit at site $(i,j)$ in the lattice is written as $X_{i,j}$ or $Z_{i,j}$. A Pauli operator acting on multiple qubits is specified by its support $S$:
\begin{equation}
    X(S) = \prod_{ij}(X_{i,j})^{S_{ij}}, \,\, S \in \mathcal{F}_2^{n_1 \times n_2},
\end{equation}
where $S_{ij}=1$ implies that $A_{ij}=1$, since qubits only exist where $A_{ij}=1$. Similar notations will be used throughout the rest of this paper. We let $|A|=\sum_{ij}A_{ij}$ and $|v|=\sum_iv_i$ denote the Hamming weights of matrices and vectors.

\begin{defn}\cite{bravyi2010tradeoffs}
The Bravyi-Bacon-Shor code constructed from $A \in \mathcal{F}_2^{n_1 \times n_2}$, denoted $\text{BBS}(A)$, is an $\llbracket N,K,D\rrbracket$ quantum subsystem code with gauge group generated by 2-qubit operators and 
\begin{align*}
    N&=|A|,\\
    K&=\rank{A},\\
    D&=\min\{ |\vec{y}|>0:\vec{y}\in \row{A} \cup \col{A}\},\\
\end{align*}
\end{defn}
As CSS quantum subsystem codes, the gauge group of BBS codes is generated by $XX$ interactions between any two qubits sharing a column in $A$ and $ZZ$ between any two qubits sharing a row in $A$. The gauge group can be more formally written as 
\begin{align}
\label{eq:BBS_gx}
\mathcal{G}^{(\text{bbs})}_X&=\{X(S):G_R S=0,S \subseteq A\},\\
\label{eq:BBS_gz}
\mathcal{G}^{(\text{bbs})}_Z&=\{Z(S):S G_R^T =0,S \subseteq A\},
\end{align}
where $G_R = (1,1, \ldots, 1)$ is the generating matrix of the classical repetition code, and the subset notation $S\subseteq A$ means that $S$ is a matrix such that, for all $i,j$, $S_{ij}=1$ implies $A_{ij}=1$.

For bare logical operators of the BBS code to commute with all of its gauge operators, each bare logical X-type operator must be supported on entire rows of the matrix and each bare logical Z-type operator must be supported on entire columns of the matrix. To express this similarly to the gauge operators above, define the parity check matrix of the classical repetition code $H_R$. Then we have the sets of $X$- and $Z$-type logical operators:
\begin{align}
\label{eq:BBS_Lx}
\mathcal{L}^{(\text{bbs})}_X&=\{X(S \cap A):S H^T_R=0\},\\
\label{eq:BBS_Lz}
\mathcal{L}^{(\text{bbs})}_Z&=\{Z(S \cap A):H_R S =0\}.
\end{align}

Consequently, the group of stabilizers for the BBS code is the intersection of the group of bare logical operators with the gauge group:
\begin{align}
\label{eq:BBS_Sx}
\mathcal{S}_X^{(\text{bbs})} &= \mathcal{L}_X^{(\text{bbs})} \cap \mathcal{G}_X^{(\text{bbs})}\\
    &= \{ X(S \cap A):S H^T_R=0, G_1 S=0 \},\\
\label{eq:BBS_Sz}
\mathcal{S}_Z^{(\text{bbs})} &= \mathcal{L}_Z^{(\text{bbs})} \cap \mathcal{G}_Z^{(\text{bbs})}\\
    &= \{ Z(S \cap A):H_R S=0, S G_2^T=0 \}.
\end{align}

\subsection{Constructing BBS codes with classical linear codes}
In \cite{yoder2019optimal}, the following method of constructing a BBS code from classical codes was given.
\begin{thm}
 Given two classical linear codes $\mathcal{C}_1$ and $\mathcal{C}_2$ with parameters $[ n_1,k,d_1 ]$ and $[ n_2,k,d_2]$, and generating matrices $G_1 \in \mathcal{F}_2^{k \times n_1}$ and $G_2 \in \mathcal{F}_2^{k \times n_2}$, we can construct the code $BBS(A)$ by
\begin{equation}
    A = G^T_1 Q G_2 \in \mathcal{F}_2^{n_1 \times n_2},
\end{equation}
where $Q \in \mathcal{F}_2^{k \times k}$ can be any full rank $k \times k$ matrix. Then $BBS(A)$ is an $\llbracket N,K,D \rrbracket$ quantum subsystem code with 
\begin{align}
    min(n_1 d_2, d_1 n_2) &\leq N \leq n_1 n_2,\\
    K &= k,\\
    D &= min(d_1, d_2).
\end{align}
\end{thm}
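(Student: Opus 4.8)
The plan is to reduce the statement to a few facts of linear algebra over $\mathcal{F}_2$ and then read the code parameters off the definition of $\text{BBS}(A)$. The key facts to establish are $\rank{A}=k$ together with the identifications $\col{A}=\mathcal{C}_1$ and $\row{A}=\mathcal{C}_2$; from these, the values of $K$ and $D$ follow at once. Throughout I would use that $G_1,G_2$, being generator matrices of $k$-dimensional codes, have full row rank $k$ (so $G_1^T,G_2^T$ have full column rank $k$), and that $Q$ is invertible.

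First I would compute the rank of $A=(G_1^T Q)G_2$. Two applications of Sylvester's rank inequality give $\rank{G_1^T Q}\ge\rank{G_1^T}+\rank{Q}-k=k$ and then $\rank{A}\ge\rank{G_1^T Q}+\rank{G_2}-k=k$, while $\rank{A}\le\min\{\rank{G_1^T},\rank{Q},\rank{G_2}\}=k$; hence $K=\rank{A}=k$. For the column space, the columns of a product lie in the column space of the left factor and $Q$ invertible gives $\col{G_1^T Q}=\col{G_1^T}$, so $\col{A}\subseteq\col{G_1^T}=\row{G_1}=\mathcal{C}_1$; since $\dim\col{A}=\rank{A}=k=\dim\mathcal{C}_1$ the inclusion is an equality. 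The same argument applied to $A^T=G_2^T Q^T G_1$ gives $\row{A}=\row{G_2}=\mathcal{C}_2$. The definition of $\text{BBS}(A)$ then yields $D=\min\{|\vec y|>0:\vec y\in\row{A}\cup\col{A}\}=\min\{|\vec y|>0:\vec y\in\mathcal{C}_1\cup\mathcal{C}_2\}=\min(d_1,d_2)$.

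For $N=|A|$, the bound $N\le n_1 n_2$ is immediate since $A\in\mathcal{F}_2^{n_1\times n_2}$. For the lower bound, note $G_1^T Q$ has trivial kernel (it has full column rank $k$), so column $j$ of $A=(G_1^T Q)G_2$ vanishes exactly when column $j$ of $G_2$ does, and otherwise it is a nonzero element of $\col{A}=\mathcal{C}_1$ and so has weight at least $d_1$; summing over the columns, $N\ge d_1 n_2'$ where $n_2'$ is the number of nonzero columns of $G_2$. Symmetrically (using $A^T$), $N\ge d_2 n_1'$ with $n_1'$ the number of nonzero columns of $G_1$. Provided $G_1,G_2$ have no identically-zero column --- equivalently, neither $\mathcal{C}_1$ nor $\mathcal{C}_2$ has a coordinate on which the whole code vanishes, which holds in particular for generator matrices in systematic form $[I\,|\,P]$ --- we have $n_1'=n_1$ and $n_2'=n_2$, whence $N\ge\max(d_1 n_2,d_2 n_1)\ge\min(d_1 n_2,d_2 n_1)$.

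I expect essentially all of this to be routine; the one point deserving care is the nondegeneracy hypothesis used for the lower bound on $N$. An all-zero column of $G_1$ is a coordinate of $\mathcal{C}_1$ that is identically zero, and it forces an all-zero row of $A$, so it genuinely weakens the count above; indeed small degenerate examples violate the stated lower bound. So the clean statement should carry the assumption that the chosen generator matrices have no zero columns (equivalently, that the classical codes use all of their coordinates), and I would make that explicit; with it, the argument above is complete.
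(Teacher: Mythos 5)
Your proof is correct and takes essentially the same route as the paper, which presents this step only as the one-line remark that $\col{A}=\row{G_1}=\mathcal{C}_1$ and $\row{A}=\row{G_2}=\mathcal{C}_2$ ``and the conclusions follow''; you have simply supplied the rank computation and the column-counting argument for $N$ that the paper leaves implicit (and in fact you get the slightly stronger bound $N\ge\max(n_1d_2,d_1n_2)$). Your caveat about zero columns is a legitimate observation the paper glosses over --- e.g.\ $G_1=G_2=(1\ 0)$, $Q=(1)$ gives $|A|=1<\min(n_1d_2,d_1n_2)=2$ --- though it is harmless in practice since one never uses a generator matrix with an identically zero coordinate.
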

The matrix $Q \in \mathcal{F}_2^{k \times k}$ represents the non-uniqueness of the generating matrices, and adjusting $Q$ would only affect the number of physical qubits in $\text{BBS}(A)$. It is easy to see that $\col{A} = \row{G_1} = \mathcal{C}_1$, and $\row{A} = \row{G_2} = \mathcal{C}_2$, and the conclusions in the theorem about the code parameters follow.

\subsection{Example: A $[[21,4,3]]$ Bravyi-Bacon-Shor Code}
\label{sec:BBS_Hamming}
\begin{table*}
\begin{center}
\begin{tabular}{ c|c|c||c}
 \hline
 Qubits & $X_L$ & $Z_L$ & Stabilizers\\ \hline \hline
1 & $X_0 X_1 X_2$ & $Z_0 Z_12 Z_{17}$ & $X_0 X_1 X_2 X_3 X_4 X_5 X_9 X_{10} X_{11} X_{12} X_{13} X_{14}$\\
2 & $X_3 X_4 X_5$ & $Z_3 Z_9 Z_{16}$ & $X_0 X_1 X_2 X_6 X_7 X_8 X_9 X_{10} X_{11} X_{15} X_{16} X_{17}$\\ 
3 & $X_6 X_7 X_8$ & $Z_6 Z_{15} Z_{18}$ & $X_3 X_4 X_5 X_6 X_7 X_8 X_9 X_{10} X_{11} X_{18} X_{19} X_{20}$\\ 
4 & $X_3 X_4 X_5 X_6 X_7 X_8 X_9 X_{10} X_{11}$ & $Z_3 Z_4 Z_9 Z_{13} Z_{16} Z_{19}$ & $Z_6 Z_{15} Z_{18} Z_3 Z_9 Z_{16} Z_4 Z_{13} Z_{19} Z_7 Z_{10} Z_{14}$\\  
& &  & $Z_6 Z_{15} Z_{18} Z_0 Z_{12} Z_{17} Z_{4} Z_{13} Z_{19} Z_{1} Z_5 Z_8$\\
& &  & $Z_3 Z_{9} Z_{16} Z_0 Z_{12} Z_{17} Z_{4} Z_{13} Z_{19} Z_{2} Z_{11} Z_{20}$\\  
\hline \hline   
\end{tabular}
\caption{Stabilizers and a set of canonical logical operators for the $[[21, 4, 3]]$ Bravyi-Bacon-Shor code constructed using the $[7,4,3]$ Hamming code.}
\label{table:hamming}
\end{center} 
\end{table*}

The $[7,4,3]$ Hamming code is generated by \[G = 
    \begin{bmatrix}
    1 & 0 & 0 & 0 & 1 & 1 & 0 \\
    0 & 1 & 0 & 0 & 1 & 0 & 1 \\
    0 & 0 & 1 & 0 & 0 & 1 & 1 \\
    0 & 0 & 0 & 1 & 1 & 1 & 1
    \end{bmatrix}
    ,
    \text{\space}H = 
    \begin{bmatrix}
    1 & 1 & 0 & 1 & 1 & 0 & 0 \\
    1 & 0 & 1 & 1 & 0 & 1 & 0 \\
    0 & 1 & 1 & 1 & 0 & 0 & 1
    \end{bmatrix}.
    \]
    Using 
    $Q = \begin{psmallmatrix} 0 & 0 & 1 & 0\\ 0 & 1 & 0 & 1 \\ 1 & 0 & 0 & 0 \\ 0 & 1 & 0 & 0\end{psmallmatrix}$ we can construct a $[[21, 4, 3]]$ Bravyi-Bacon-Shor code $A = G^T Q G$:
    \[A = 
    \begin{bmatrix}
    0 & 0 & 1 & 0 & 0 & 1 & 1 \\
    0 & 1 & 0 & 1 & 0 & 1 & 0 \\
    1 & 0 & 0 & 0 & 1 & 1 & 0 \\
    0 & 1 & 0 & 0 & 1 & 0 & 1 \\
    0 & 0 & 1 & 1 & 1 & 0 & 0 \\
    1 & 1 & 1 & 0 & 0 & 0 & 0 \\
    1 & 0 & 0 & 1 & 0 & 0 & 1 \\
    \end{bmatrix},
    \]
which minimizes the number of qubits used. We construct a canonical set of  bare logical operators for the four logical qubits encoded in the $\llbracket 21,4,3 \rrbracket$ code along with a set of stabilizers generators, as shown in TABLE \ref{table:hamming}. Note that while qubit $4$ has high weight bare logical operators due to the construction that we have chosen, it can still suffer from weight three logical operators, such as $Z_4Z_{13}Z_{19}$, and so its error rate has the same slope as the others.

We estimated the performance of this code by simulating it under circuit level standard depolarizing noise, where Pauli channels with Kraus operators
\begin{align}
\begin{split}
E_{1q} &= \{\sqrt{1-p}I, \sqrt{\frac{p}{3}}X, \sqrt{\frac{p}{3}}Y, \sqrt{\frac{p}{3}}Z\},\\
E_{2q} &= \{\sqrt{1-p}I, \sqrt{\frac{p}{15}}IX, \ldots \sqrt{\frac{p}{15}}ZZ\},
\end{split}
\label{eq:pauli}
\end{align}
are applied after each 1- and 2-qubit gate in the circuit, respectively. We call $p\in[0,1]$ the physical error rate. Assuming the code is fault-tolerantly prepared into its logical $\ket{0000}$ state, we simulated the circuit of error correction and destructive measurement of data qubits with single qubit memory errors added before error correction. The same error rate is used across the circuit for memory errors, gate errors, and measurement errors. Note that idle errors are not considered in the circuit-level simulations presented in this paper. When we consider a trapped ion architecture where long-range interactions required by these subsystem codes of interest can be easily implemented, idle errors have minimal effect to the logical system when compared to gate errors \cite{debroy2019logical}.

The results are shown in FIG.~\ref{fig:BBS_hamming}. Note that since the BBS codes can be considered a compass code in 2-dimensions \cite{li20192d}, to create a fault-tolerant circuit for syndrome extraction it suffices to use a single ancillary qubit for each of the weight-12 stabilizers as listed in TABLE \ref{table:hamming}. Hence the total number of qubits required to perform fault-tolerant syndrome extraction for this code is $21+6=27$. We perform the syndrome extraction once and if the syndrome is trivial, we stop and no correction is needed. If the syndrome is not trivial, we measure the syndrome again and decode with the outcome.

From FIG.~\ref{fig:BBS_hamming} we can see that qubit $4$ performs slightly worse than qubits 1-3, due to the fact that its higher weight logical operators have more chance of anti-commuting with dressed logical operators than qubits 1-3.

\begin{figure}[ht]
    \centering
    \includegraphics[width = 0.9\linewidth]{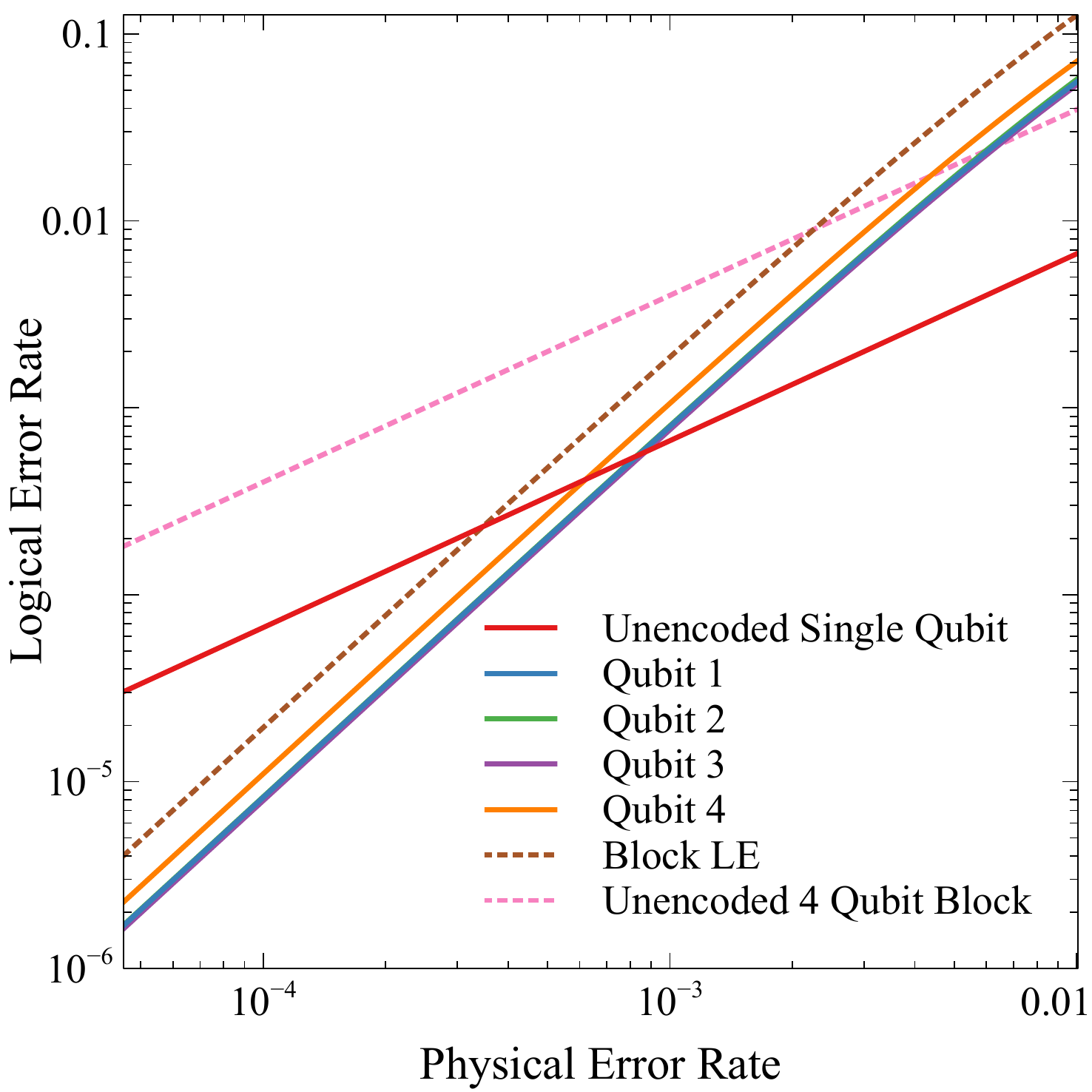}
    \caption{Simulated performance of the $[[21, 4, 3]]$ BBS code under circuit level depolarizing error, with one ancillary qubit per stabilizer for fault-tolerant syndrome extraction. The block pseudothreshold for the code block with 4 encoded logical qubits is $2.3 \times 10^{-3}$, while the per logical qubit pseudothreshold for qubits 1-3 is $8.7 \times 10^{-4}$.}
    \label{fig:BBS_hamming}
\end{figure}
    
\section{Another family of subsystem hypergraph product codes}
\label{sec:bigSHP}
In this section we take a look at the ``generalized Shor" codes in Bacon and Casaccino \cite{bacon2006quantum} from a new perspective. In particular, we find that these codes are in a sense the most natural subsystem hypergraph product codes because two of them, without ancillas, can be gauge-fixed to a hypergraph product code and, conversely, any hypergraph product code can be gauge-fixed into two generalized Shor codes. We therefore refer to generalized Shor codes as subsystem hypergraph product (SHP) codes.

Contrast SHP codes with BBS codes, which can also be gauge-fixed to hypergraph product codes \cite{yoder2019optimal}. Gauge-fixing BBS codes requires ancillas and the result is only a certain subset of all hypergraph product codes with less than constant rate.

\subsection{Hypergraph product codes}
\label{sec:HGP}
To facilitate our proofs, we review the hypergraph product code construction briefly in this section.

\begin{defn}\cite{tillich2013quantum}
Let $H_1\in\{0,1\}^{n_1^T\times n_1}$ and $H_2\in\{0,1\}^{n_2^T\times n_2}$. The hypergraph product (HGP) of $H_1$ and $H_2$ is a quantum code $\text{HGP}(H_1,H_2)$ with stabilizers
\begin{align}\label{eq:hgp_sx}
S^{(\text{hgp})}_X&=\left(H_1\otimes I_{n_2},I_{n_1^T}\otimes H_2^T\right),\\\label{eq:hgp_sz}
S^{(\text{hgp})}_Z&=\left(I_{n_1}\otimes H_2,H_1^T\otimes I_{n_2^T}\right).
\end{align}
\end{defn}
By Eq.~\eqref{eq:hgp_sx} we mean that each vector $v\in\mathcal{F}_2^N$ in the rowspace of the matrix on the righthand side indicates an $X$-type Pauli operator $X^{v}:=\prod_{i=1}^NX_i^{v_i}$ in the stabilizer group. Likewise with $Z$-type operators in Eq.~\eqref{eq:hgp_sz}. Similar notation will be used throughout this section.

Treating $H_1$ and $H_2$ as parity check matrices, we have two classical codes $\mathcal{C}_1=\ker(H_1)$ and $\mathcal{C}_2=\ker(H_2)$ with some parameters $[n_1,k_1,d_1]$ and $[n_2,k_2,d_2]$, respectively. Likewise, treat $H_1^T$ and $H_2^T$ as parity check matrices of the ``transpose" codes $\mathcal{C}_1^T=\ker(H_1^T)$ and $\mathcal{C}_2^T=\ker(H_2^T)$ with respective parameters $[n_1^T,k_1^T,d_1^T]$ and $[n_2^T,k_2^T,d_2^T]$. Because of the rank-nullity theorem
\begin{equation}\label{eq:parameter_relation}
n_i-k_i=n_i^T-k_i^T
\end{equation}
for $i=1,2$. The hypergraph product code $\text{HGP}(H_1,H_2)$ then has parameters \cite{tillich2013quantum}
\begin{equation}
\llbracket n_1n_2+n_1^Tn_2^T,k_1k_2+k_1^Tk_2^T,D\rrbracket,
\end{equation}
where
\begin{equation}
D=\bigg\{\begin{array}{lr}\min(d_1,d_2),&k_1^T=0\text{ or }k_2^T=0\\\min(d_1,d_2,d_1^T,d_2^T),&\text{otherwise}\end{array}.
\end{equation}

Moving on, we notice that there are $n_1n_2+n_1^Tn_2^T$ qubits in $\text{HGP}(H_1,H_2)$ that we lay out on two square lattices, an $n_1\times n_2$ lattice referred to as the ``large'' lattice, denoted $L$, and an $n_1^T\times n_2^T$ lattice referred to as the ``small'' lattice, denoted $l$. See Fig.~\ref{fig:hypergraph_prod.png}. Despite the names, the small lattice need not contain fewer qubits than the large lattice, although typically (e.g.~in random constructions of classical LDPC codes \cite{Gallager1962}) $n_i^T\approx n_i-k_i<n_i$ and this is the case.
\begin{figure}
\centering
\includegraphics[width=0.5\columnwidth]{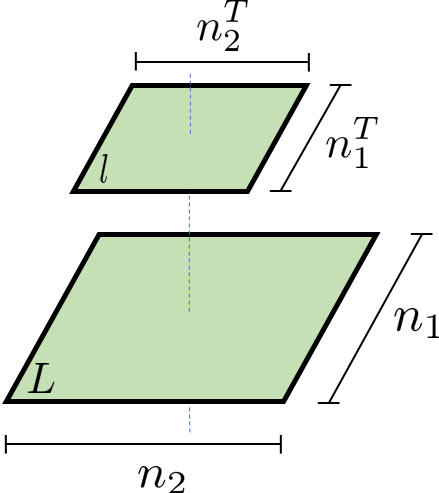}
\caption{The large and small lattices, $L$ and $l$.}
\label{fig:hypergraph_prod.png}
\end{figure}

We label qubits in these lattices in row major fashion. Thus, a (row) vector $r^T\otimes c^T$ for $r\in\{0,1\}^{n_1}$ and $c\in\{0,1\}^{n_2}$ indicates exactly the qubits that are both in the rows indicated by $r$ and in the columns indicated by $c$ of the large lattice. Qubits in the large lattice are labeled first, i.e.~$1,2,\dots,n_1n_2$, followed by qubits in the small lattice, i.e.~$n_1n_2+1,\dots,n_1n_2+n_1^Tn_2^T$.

For later purposes, we point out some subgroups of the stabilizer group. For instance, certain stabilizers of $\text{HGP}(H_1,H_2)$ are supported entirely on the large lattice. Because the rowspace of $S^{(\text{hgp})}_X$ represents all $X$-type stabilizers, if $x\in\{0,1\}^{n_1^T}$, $c\in\ker(H_2)=\mathcal{C}_2\subseteq\{0,1\}^{n_2}$, then
\begin{equation}
(x^T\otimes c^T)S^{(\text{hgp})}_X=\left(x^TH_1\otimes c^T,0\right)
\end{equation}
is a stabilizer supported entirely on the first $n_1n_2$ qubits, i.e.~entirely on the large lattice. Let $G_1\in\{0,1\}^{k_1\times n_1}$ and $G_2\in\{0,1\}^{k_2\times n_2}$ be generator matrices for codes $\mathcal{C}_1$ and $\mathcal{C}_2$. Then, we can provide a generating set of stabilizers on the large lattice like
\begin{align}\label{eq:large_lattice_Sx}
S^{(\text{hgp},L)}_X&=\left(H_1\otimes G_2\right),\\\label{eq:large_lattice_Sz}
S^{(\text{hgp},L)}_Z&=\left(G_1\otimes H_2\right).
\end{align}

Similarly, some stabilizers of $\text{HGP}(H_1,H_2)$ are supported entirely on the small lattice. Let $F_1\in\{0,1\}^{k_1^T\times n_1^T}$ and $F_2\in\{0,1\}^{k_2^T\times n_2^T}$ be generating matrices for codes $\mathcal{C}_1^T$ and $\mathcal{C}_2^T$. The stabilizers on the small lattice have generating sets
\begin{align}\label{eq:small_lattice_Sx}
S^{(\text{hgp},l)}_x&=\left(F_1\otimes H_2^T\right),\\\label{eq:small_lattice_Sz}
S^{(\text{hgp},l)}_z&=\left(H_1^T\otimes F_2\right).
\end{align}

Logical operators of $\text{HGP}(H_1,H_2)$ are those that commute with all stabilizers (we include the stabilizers themselves in this count). For instance, rows of the matrix $\left(I_{n_1}\otimes G_2,0\right)$ indicate $X$-type logical operators, since
\begin{equation}
S^{(\text{hgp})}_Z\left(I_{n_1}\otimes G_2,0\right)^T=0.
\end{equation}
The complete generating sets of $X$-type and $Z$-type logical operators are
\begin{align}
L^{(\text{hgp})}_X&=\left(\begin{array}{cc}
H_1\otimes I_{n_2}&I_{n_1^T}\otimes H_2^T\\
I_{n_1}\otimes G_2&0\\
0&F_1\otimes I_{n_2^T}
\end{array}\right),\\
L^{(\text{hgp})}_Z&=\left(\begin{array}{cc}
I_{n_1}\otimes H_2&H_1^T\otimes I_{n_2^T}\\
G_1\otimes I_{n_2}&0\\
0&I_{n_1^T}\otimes F_2
\end{array}\right).
\end{align}
Nontrivial logical operators are logical operators that are not stabilizers.

An alternative representation of stabilizers and logical operators is to specify them by their supports. For instance $X^{(L)}(S)$ is an $X$-type Pauli supported on the qubits specified by $S\in\{0,1\}^{n_1\times n_2}$ in the large lattice. Likewise for $X^{(l)}(T)$ with $T\in\{0,1\}^{n_1^T\times n_2^T}$ on the small lattice. Of course, $Z$-type Paulis $Z^{(L)}(S)$, $Z^{(l)}(T)$ are denoted analogously.

Using this support-matrix notation, we get alternative descriptions of the stabilizer groups
\begin{alignat}{3}\label{eq:ShgpX}
\mathcal{S}^{(\text{hgp})}_X&=\big\{X^{(L)}(S)X^{(l)}(T):&SH_2^T&=H_1^TT,\\\nonumber
&&G_1S&=0,TF_2^T=0\big\},\\\label{eq:ShgpZ}
\mathcal{S}^{(\text{hgp})}_Z&=\big\{Z^{(L)}(S)Z^{(l)}(T):&H_1S&=TH_2,\\\nonumber
&&SG_2^T&=0,F_1T=0\big\}.
\end{alignat}
and the logical operators
\begin{align}\label{eq:LhgpX}
\mathcal{L}_X^{(\text{hgp})}&=\{X^{(L)}(S)X^{(l)}(T):SH_2^T=H_1^TT\},\\\label{eq:LhgpZ}
\mathcal{L}_Z^{(\text{hgp})}&=\{Z^{(L)}(S)Z^{(l)}(T):H_1S=TH_2\},
\end{align}
which are useful for discussing gauge-fixing later.

\subsection{Subsystem hypergraph product codes}
\label{sec:SHP}

In this section, we define the generalized Shor codes from \cite{bacon2006quantum} with notation similar to our description of HGP codes. This makes the two code families easier to relate later.

\begin{defn}\label{def:SHP_codes}
Let $H_1\in\{0,1\}^{n_1^T\times n_1}$ and $H_2\in\{0,1\}^{n_2^T\times n_2}$. The subsystem hypergraph product (SHP) code of $H_1$ and $H_2$ is the quantum subsystem code $\text{SHP}(H_1,H_2)$ with gauge operators
\begin{align}
G^{(\text{shp})}_X&=\left(H_1\otimes I_{n_2}\right),\\
G^{(\text{shp})}_Z&=\left(I_{n_1}\otimes H_2\right).
\end{align}
\end{defn}
It is worth noting that while the definition of $\text{HGP}(H_1,H_2)$ depends on the parity check matrices $H_1$ and $H_2$, the definition of $\text{SHP}(H_1,H_2)$ depends only on the codes $\mathcal{C}_1=\ker{H_1}$ and $\mathcal{C}_2=\ker{H_2}$. This is because the gauge groups $G^{(\text{shp})}_X$ and $G^{(\text{shp})}_Z$ are the same for $\text{SHP}(H_1,H_2)$ and $\text{SHP}(H_1',H_2')$ whenever $\row{H_1}=\row{H_1'}$ and $\row{H_2}=\row{H_2'}$.

Let us calculate the parameters $\llbracket N,K,D\rrbracket$ of the SHP code. There are clearly $N=n_1n_2$ qubits in the code, which we place on a lattice like in Fig.~\ref{fig:SHP_code}.

\begin{figure}
\centering
\includegraphics[width=0.5\columnwidth]{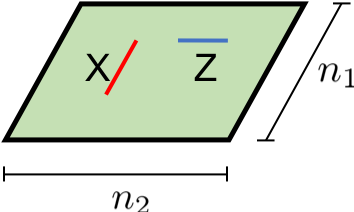}
\caption{A subsystem hypergraph product code. For each column, $X$-type gauge operators are supported on qubits indicated by the parity checks $H_1$. For each row, $Z$-type gauge operators are supported on qubits indicated by the parity checks $H_2$.}
\label{fig:SHP_code}
\end{figure}

To calculate $K$, begin by noticing that certain $X$-type operators, the bare $X$-type logical operators, commute with the entire group of gauge operators. These are generated by
\begin{equation}\label{eq:LX}
L^{(\text{shp})}_X=\left(I_{n_1}\otimes G_2\right),
\end{equation}
because $L^{(\text{shp})}_X\left(G^{\text{(shp)}}_Z\right)^T=0$. Likewise, the bare $Z$-type logical operators are
\begin{equation}\label{eq:LZ}
L^{(\text{shp})}_Z=\left(G_1\otimes I_{n_2}\right).
\end{equation}
The stabilizers of a subsystem code are those gauge operators that also commute with all elements of the gauge group, i.e.~the center of the gauge group. These are generated by
\begin{align}\label{eq:shp_SX}
S^{(\text{shp})}_X&=\left(H_1\otimes G_2\right),\\\label{eq:shp_SZ}
S^{(\text{shp})}_Z&=\left(G_1\otimes H_2\right),
\end{align}
matching those stabilizers of $\text{HGP}(H_1,H_2)$ that are supported entirely on the large lattice (see Eqs.~\eqref{eq:large_lattice_Sx}, \eqref{eq:large_lattice_Sz}). 

Next, the number of encoded qubits can be calculated by comparing the ranks of $L^{(\text{shp})}_X$ and $S^{(\text{shp})}_X$ (or, equivalently of $L^{(\text{shp})}_Z$ and $S^{(\text{shp})}_Z$).
\begin{align}
K&=\rank{L^{(\text{shp})}_X}-\rank{S^{(\text{shp})}_X}\\
&=n_1k_2-(n_1-k_1)k_2\\
&=k_1k_2.
\end{align}

What does the description of $\text{SHP}(H_1,H_2)$ look like in support-matrix notation? Writing down the relevant groups, we have
\begin{align}
\mathcal{G}^{(\text{shp})}_X&=\{X(S):G_1S=0\},\\
\mathcal{G}^{(\text{shp})}_Z&=\{Z(S):SG_2^T=0\},\\
\mathcal{L}^{(\text{shp})}_X&=\{X(S):SH_2^T=0\},\\
\mathcal{L}^{(\text{shp})}_Z&=\{Z(S):H_1S=0\},\\
\mathcal{S}^{(\text{shp})}_X&=\{X(S):G_1S=0,SH_2^T=0\},\\
\mathcal{S}^{(\text{shp})}_Z&=\{Z(S):SG_2^T=0,H_1S=0\}.
\end{align}
Dressed logical operators are denoted $\hat{\mathcal{L}}^{(\text{shp})}_X=\mathcal{L}^{(\text{shp})}_X\mathcal{G}^{(\text{shp})}_X$ and $\hat{\mathcal{L}}^{(\text{shp})}_Z=\mathcal{L}^{(\text{shp})}_Z\mathcal{G}^{(\text{shp})}_Z$.

To compute the distance $D$ of the subsystem hypergraph product code, we need to find the minimum weight of an element of $\hat{\mathcal{L}}^{(\text{shp})}_X-\mathcal{G}^{(\text{shp})}_X$ or of $\hat{\mathcal{L}}^{(\text{shp})}_Z-\mathcal{G}^{(\text{shp})}_Z$. Let us suppose $M\in\hat{\mathcal{L}}^{(\text{shp})}_X-\mathcal{G}^{(\text{shp})}_X$. Then, $M$ can be written as $M=X(S)X(T)$ where $X(S)\in\mathcal{L}^{(\text{shp})}_X$ and $X(T)\in\mathcal{G}^{(\text{shp})}_X$, so $SH_2^T=0$ and $G_1T=0$. Also, since $M$ is not in $\mathcal{G}^{(\text{shp})}_X$, there is some $M'$ corresponding to a row of $L_Z^{(\text{shp})}$ that anticommutes with $M$. Glancing at Eq.~\eqref{eq:LZ}, this means $M'=X(S')$ where $S'$ is the outer product $S'=\vec c\hspace{2pt}\hat e_j^T$ for some $\vec c\in\mathcal{C}_1$ and some $j$ such that
\begin{equation}
\tr{((S+T)^TS')}=\hat e_j^T(S+T)^T\vec c=1.
\end{equation}
This trace being 1 (modulo two) expresses the anticommutation of $M$ and $M'$. Clearly, it implies $(S+T)^T\vec c\neq\vec 0$. Because $\vec c\in\mathcal{C}_1$, there is a vector $\vec x$ such that $\vec c=G_1^T\vec x$ and accordingly,
\begin{equation}
(S+T)^T\vec c=S^T\vec c+T^TG_1^T\vec x=S^T\vec c
\end{equation}
using $G_1T=0$. Moreover, $H_2S^T\vec c=0$ using $SH_2^T=0$ and so $(S+T)^T\vec c$ is a nonzero vector in $\ker(H_2)=\mathcal{C}_2$. Thus, by definition of the classical code distance $|M|=|S+T|\ge|(S+T)^T\vec c\hspace{1pt}|\ge d_2$.

Likewise, if we suppose $M\in\hat{\mathcal{L}}^{(\text{shp})}_Z-\mathcal{G}^{(\text{shp})}_Z$ we find $|M|\ge d_1$. Thus, we have shown $D\ge\min(d_1,d_2)$ and it is not hard given the form of $L^{(\text{shp})}_X$ and $L^{(\text{shp})}_Z$ to see that this in fact holds with equality $D=\min(d_1,d_2)$. Therefore, the subsystem hypergraph product code is a $\llbracket n_1n_2,k_1k_2,\min(d_1,d_2)\rrbracket$ code.

Quantum subsystem codes generalize quantum subspace codes because their stabilizers and logical qubits do not fix all the available degrees of freedom. The remaining degrees of freedom are counted as gauge qubits. These can be thought of as extra logical qubits that are not protected and thus not used to hold any meaningful information. If we calculate the number of gauge qubits in a subsystem hypergraph product code, we find it is
\begin{equation}\label{eq:count_gauge_qubits}
N-\rank{S^{(\text{shp})}_X}-\rank{S^{(\text{shp})}_Z}-K=(n_1-k_1)(n_2-k_2).
\end{equation}

\subsection{Example: A $\llbracket 49, 16, 3 \rrbracket$ subsystem hypergraph product code}
\begin{figure}
    \centering
    \includegraphics[width = 0.9\linewidth]{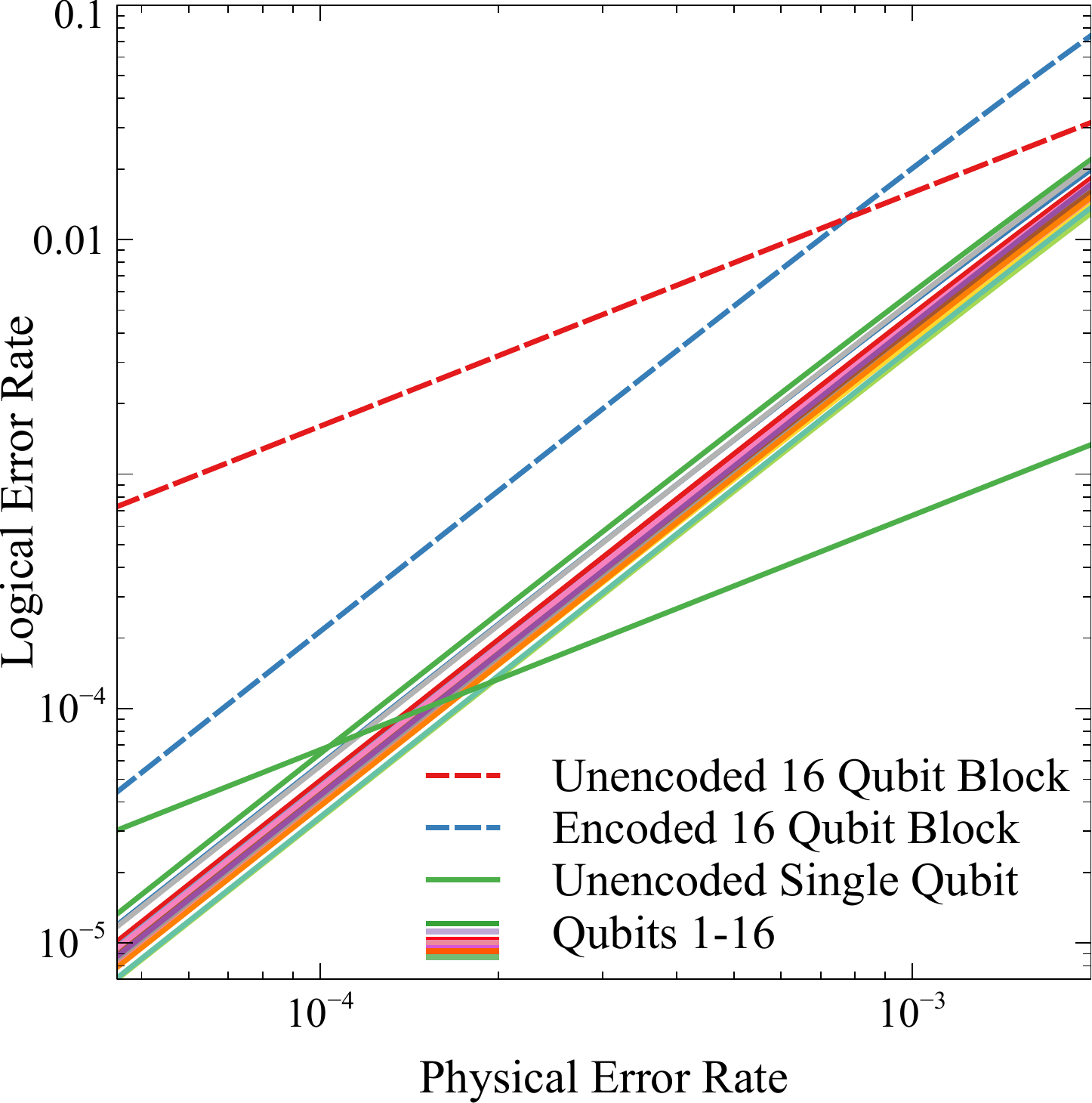}
    \caption{Simulated performance of the $\llbracket 49, 16, 3 \rrbracket$ SHP code under circuit level depolarizing noise, with one ancillary qubit per stabilizer for fault-tolerant syndrome extraction. The block pseudothreshold for the single code block with 16 encoded logical qubits is $8 \times 10^{-4}$, while the per logical qubit pseudothreshold ranges between $10^{-4}$ to $2 \times 10^{-4}$.}
    \label{fig:SHP_hamming}
\end{figure}
Using the classical $[7,4,3]$ Hamming code for both the $X$ and $Z$ part, we can construct a $\llbracket 49, 16, 3 \rrbracket$ subsystem hypergraph product code by following Definition~\ref{def:SHP_codes}. We can construct a canonical set of logical operators of weight $3$ and $4$ for the $16$ logical qubits encoded in the same code block, along with a set of 24 stabilizer generators. Note that similar to the BBS codes, for each of the stabilizers it suffices to use a single ancillary qubit to fault-tolerantly extract its syndrome, by performing CNOT gates in the order of gauge operators and hence directing propagated errors away from the direction of logical errors.

Similar to the $\llbracket 21, 4, 3 \rrbracket$ BBS code, we study the $\llbracket 49, 16, 3 \rrbracket$ SHP code under circuit level depolarizing noise as shown in Eq.~\ref{eq:pauli}. The results are shown in FIG. \ref{fig:SHP_hamming}. Since the $4$ encoded logical bits in the $[7,4,3]$ Hamming code have different performances and the constructed logical operators for the SHP code have different weights, the performance of the $16$ encoded logical qubits varies and their pseudothreshold ranges between $10^{-4}$ and $2 \times 10^{-4}$.

\subsection{SHP codes gauge-fix to HGP codes}
To begin, we define gauge-fixing in general. See also \cite{yoder2019optimal}. We use the notation that for gauge group $\mathcal{G}$, its stabilizer group (centralizer) is $\mathcal{S}(\mathcal{G})$ and it encodes $K(\mathcal{G})$ qubits.
\begin{defn}\label{defn:gauge_fixing}
We say that the gauge group $\mathcal{G}'$ is a gauge-fixing of the gauge group $\mathcal{G}$ if
\begin{enumerate}
\item $\mathcal{S}(\mathcal{G})\le\mathcal{S}(\mathcal{G}')\le\mathcal{G}'\le\mathcal{G}$, and
\item $K(\mathcal{G})=K(\mathcal{G}')$.
\end{enumerate}
We also say that a code is a gauge-fixing of another code if their gauge groups are related in this way.
\end{defn}

We noted below Eq.~\eqref{eq:shp_SZ} that the stabilizers of $\text{SHP}(H_1,H_2)$ are exactly those stabilizers of $\text{HGP}(H_1,H_2)$ that are supported entirely on the large lattice. Similarly, one can check that the stabilizers of $\text{SHP}(H_2^T,H_1^T)$ are those of $\text{HGP}(H_1,H_2)$ supported entirely on the small lattice (i.e.~Eqs.~(\ref{eq:small_lattice_Sx},\ref{eq:small_lattice_Sz})). Also, Eq.~\eqref{eq:count_gauge_qubits} says that $\text{SHP}(H_2^T,H_1^T)$ has $(n_2^T-k_2^T)(n_1^T-k_1^T)$ gauge qubits, which is the same number as $\text{SHP}(H_1,H_2)$ by Eq.~\eqref{eq:parameter_relation}.

These two facts suggest the following theorem.
\begin{thm}
$\mathcal{Q}'=\text{HGP}(H_1,H_2)$ is a gauge-fixing of $\mathcal{Q}=\text{SHP}(H_1,H_2)\text{SHP}(H_2^T,H_1^T)$.
\end{thm}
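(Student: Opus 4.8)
The plan is to verify directly the two conditions of Definition~\ref{defn:gauge_fixing} with $\mathcal{G}$ the gauge group of $\mathcal{Q}$ and $\mathcal{G}'=\mathcal{S}(\text{HGP}(H_1,H_2))$ the stabilizer group of the subspace code $\mathcal{Q}'$; since $\mathcal{Q}'$ is an ordinary stabilizer code, $\mathcal{S}(\mathcal{G}')=\mathcal{G}'$, so condition~1 reduces to the chain $\mathcal{S}(\mathcal{G})\le\mathcal{G}'\le\mathcal{G}$. First I would set up the qubit identifications: $\mathcal{Q}$ lives on the disjoint union of the $n_1n_2$ qubits of $\text{SHP}(H_1,H_2)$, identified with the large lattice $L$, and the $n_1^Tn_2^T$ qubits of $\text{SHP}(H_2^T,H_1^T)$, identified (up to a transpose of the lattice, since its natural layout is $n_2^T\times n_1^T$) with the small lattice $l$; by Eq.~\eqref{eq:parameter_relation} this is exactly the qubit set of $\text{HGP}(H_1,H_2)$. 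Because the two SHP codes act on disjoint qubits, the combined gauge group factorizes as $\mathcal{G}=\mathcal{G}_L\times\mathcal{G}_l$, its center factorizes as $\mathcal{S}(\mathcal{G})=\mathcal{S}(\mathcal{G}_L)\times\mathcal{S}(\mathcal{G}_l)$, and $K(\mathcal{G})=K(\text{SHP}(H_1,H_2))+K(\text{SHP}(H_2^T,H_1^T))$.

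For the inclusion $\mathcal{G}'\le\mathcal{G}$ I would take a generating $X$-type HGP stabilizer $X^{(L)}(S)X^{(l)}(T)$; by Eq.~\eqref{eq:ShgpX} it satisfies $G_1S=0$ and $TF_2^T=0$ (the coupling $SH_2^T=H_1^TT$ is not needed). The condition $G_1S=0$ says precisely that $X^{(L)}(S)$ is an $X$-type gauge operator of $\text{SHP}(H_1,H_2)$ (cf.~$\mathcal{G}^{(\text{shp})}_X=\{X(S):G_1S=0\}$), and, after the transpose identification, $TF_2^T=0$ says precisely that $X^{(l)}(T)$ is an $X$-type gauge operator of $\text{SHP}(H_2^T,H_1^T)$ (whose $X$-gauge group is $\{X(U):F_2U=0\}$ on the $n_2^T\times n_1^T$ layout). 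Since $X^{(L)}(S)$ and $X^{(l)}(T)$ act on the disjoint sets $L$ and $l$, their product lies in $\mathcal{G}$; the $Z$-type case is identical via Eq.~\eqref{eq:ShgpZ}, so $\mathcal{G}'\le\mathcal{G}$.

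For $\mathcal{S}(\mathcal{G})\le\mathcal{G}'$ I would use the factorization $\mathcal{S}(\mathcal{G})=\mathcal{S}(\mathcal{G}_L)\times\mathcal{S}(\mathcal{G}_l)$ together with the two facts recalled just before the theorem: $\mathcal{S}(\mathcal{G}_L)=\mathcal{S}(\text{SHP}(H_1,H_2))$ is exactly the subgroup of $\text{HGP}(H_1,H_2)$ stabilizers supported on $L$ (Eqs.~\eqref{eq:large_lattice_Sx}--\eqref{eq:large_lattice_Sz}), and $\mathcal{S}(\mathcal{G}_l)=\mathcal{S}(\text{SHP}(H_2^T,H_1^T))$ is exactly the subgroup supported on $l$ (Eqs.~\eqref{eq:small_lattice_Sx}--\eqref{eq:small_lattice_Sz}). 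Both factors therefore lie in $\mathcal{S}(\text{HGP}(H_1,H_2))=\mathcal{G}'$, hence so does their product, giving $\mathcal{S}(\mathcal{G})\le\mathcal{G}'=\mathcal{S}(\mathcal{G}')$ and finishing condition~1. Condition~2 is then a short count: $K(\text{SHP}(H_1,H_2))=k_1k_2$ and $K(\text{SHP}(H_2^T,H_1^T))=k_1^Tk_2^T$ (apply $K=k_1k_2$ to the codes $\ker H_2^T$ and $\ker H_1^T$, of dimensions $k_2^T$ and $k_1^T$), so $K(\mathcal{G})=k_1k_2+k_1^Tk_2^T$, which equals $K(\text{HGP}(H_1,H_2))$ by the stated HGP parameters.

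I expect the main obstacle to be purely bookkeeping: tracking the orientation conventions so that the gauge and stabilizer groups of $\text{SHP}(H_2^T,H_1^T)$, naturally defined on an $n_2^T\times n_1^T$ lattice, coincide after transposition with the small-lattice objects of $\text{HGP}(H_1,H_2)$ from Section~\ref{sec:HGP}; once the transposes are tracked correctly every inclusion is immediate from the support-matrix descriptions already derived. It is worth noting, though not needed for the argument, that $\mathcal{S}(\mathcal{G})$ is a \emph{proper} subgroup of $\mathcal{G}'$ whenever $\text{HGP}(H_1,H_2)$ has stabilizers straddling both lattices (those with $S\neq0$ and $T\neq0$ in Eqs.~\eqref{eq:ShgpX}--\eqref{eq:ShgpZ}), which is exactly what makes the gauge-fixing nontrivial.
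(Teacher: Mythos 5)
Your proposal is correct and follows essentially the same route as the paper's proof: both verify the two conditions of Definition~\ref{defn:gauge_fixing} using the same qubit identification (with the transpose on the small lattice), the same support-matrix descriptions of the gauge and stabilizer groups of $\mathcal{Q}$, and the same counting of $K$. The only cosmetic difference is that you make explicit the simplification $\mathcal{S}(\mathcal{G}')=\mathcal{G}'$ for the subspace code, which the paper leaves implicit.
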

\begin{proof}
We employ Definition~\ref{defn:gauge_fixing}. It should be clear that
\begin{align}
K(\mathcal{Q})&=K(\text{SHP}(H_1,H_2))+K(\text{SHP}(H_2^T,H_1^T))\\
&=k_1k_2+k_1^Tk_2^T=K(\mathcal{Q}'),
\end{align}
therefore satisfying part (2) of the definition.

For part (1), it is important to associate (via a 1-1 map) the physical qubits of $\mathcal{Q}$ and $\mathcal{Q}'$. Recall, qubits of $\mathcal{Q}'$ are placed on the two lattices $L$ and $l$. A qubit at site $(i,j)$ in $\text{SHP}(H_1,H_2)$ is associated with the qubit at $(i,j)$ in $L$. On the other hand, qubit $(i,j)$ of $\text{SHP}(H_2^T,H_1^T)$ is associated instead with the qubit at $(j,i)$ on the small lattice $l$. Now, taken as a whole, this code $\mathcal{Q}$ has gauge operators and stabilizers that can be written as
\begin{widetext}
\begin{align}
\mathcal{G}^{(\mathcal{Q})}_X&=\left\{X^{(L)}(S)X^{(l)}(T):G_1S=0,TF_2^T=0\right\},\\
\mathcal{G}^{(\mathcal{Q})}_Z&=\left\{X^{(L)}(S)X^{(l)}(T):SG_2^T=0,F_1T=0\right\},\\
\mathcal{S}^{(\mathcal{Q})}_X&=\{X^{(L)}(S)X^{(l)}(T):SH_2^T=0,H_1^TT=0,G_1S=0,TF_2^T=0\},\\
\mathcal{S}^{(\mathcal{Q})}_Z&=\{Z^{(L)}(S)Z^{(l)}(T):H_1S=0,TH_2=0,SG_2^T=0,F_1T=0\}.
\end{align}
\end{widetext}
It should now be clear that we have
\begin{align}
\mathcal{S}_X^{\mathcal{Q}}&\le\mathcal{S}_X^{\text{hgp}}\le\mathcal{G}_X^{\mathcal{Q}},\\
\mathcal{S}_Z^{\mathcal{Q}}&\le\mathcal{S}_Z^{\text{hgp}}\le\mathcal{G}_Z^{\mathcal{Q}},
\end{align}
thereby satisfying part (1) of Def.~\ref{defn:gauge_fixing}.
\end{proof}
In essence, the two SHP codes live on the large and small lattices in Fig.~\ref{fig:hypergraph_prod.png}, respectively, and gauge-fix to the HGP code by placing their gauge qubits in $(n_1-k_1)(n_2-k_2)$ maximally entangled two-qubit states.

\section{Decoding BBS and SHP codes}
\label{sec:decode}
Both the BBS codes and SHP codes can be decoded by directly running a classical decoder on the corresponding classical code used to construct the quantum code. In this section we review the decoding of BBS codes, as discussed in \cite{yoder2019optimal}, and show that similar arguments can be applied to SHP codes. We review the classical belief propagation decoder for expander codes, and show how it can be used to tolerate measurement errors and therefore decode BBS and SHP codes. 

\subsection{Decoding the BBS codes}
To decode the BBS codes, we have to establish associations between the stabilizers of the quantum code and the parity checks of the classical code. For convenience, we assume that $A$ is an $n \times n$ symmetric matrix constructed as $A = G^T Q G$, where $G$ is the generating matrix of a $[n,k,d]$ classical code $\mathcal{C}$, so there is only one classical code $\mathcal{C}=\row{A}=\col{A}$ under consideration. We let $H$ be the parity check matrix of $\mathcal{C}$. 

Given Eq. \ref{eq:BBS_Sx}, let $S$ be the support of a X-type stabilizer of BBS(A), $X(S \cap A) \in \mathcal{S}_X^{(\text{bbs})}$. Since $SH^T_R=0$, rows of $S$ are codewords of $\mathcal{C}_R$, either all 1s or all 0s. Because $GS=0$, columns of $S$ are parity checks of $\mathcal{C}$. Therefore, $S \cap A = \diag{\vec{r}}A$ for some $\vec{r} \in \row{H}$. Hence we have
\begin{equation}
    \mathcal{S}_X^{(\text{bbs})} = \{ X(\diag{\vec{r}}A):\vec{r}\in \row{H} \}.
\end{equation}
Similarly, 
\begin{equation}
    \mathcal{S}_Z^{(\text{bbs})} = \{ Z(A\diag{\vec{c}}):\vec{c}\in \row{H} \}.
\end{equation}
Thus, the parity checks of the classical code indicate which sets of rows or columns constitute a stabilizer, and give us a one-to-one correspondence between the quantum stabilizers and the classical parity checks.

Since single qubit Pauli $X$ errors within a column are equivalent up to gauge operators, each column is only sensitive to an odd number of Pauli $X$ error. The even or oddness of a column corresponds to the 0 or 1 state of an effective classical bit in the code $\mathcal{C}$. Similarly, the symmetry of A indicates that the same correspondence holds for Pauli $Z$ errors in rows and the even or oddness of rows in $A$.

\begin{algo}[\textbf{The Induced Decoder for $\text{BBS}(A)$}]
Given a symmetric binary matrix $A=G^T Q G$ where $\mathcal{C}=\row{G}=\row{A}=\col{A}$ is a classical $[n,k,d]$ code, we can decode the Bravyi-Bacon-Shor code $\text{BBS}(A)$ by:\\

\begin{itemize}
    \item Collect the $X$- or $Z$-type syndrome $\vec{\sigma}$ for the quantum code $\text{BBS}(A)$.
    \item Run the classical decoder to obtain a set of corrections for the classical code $\vec{c}=\mathcal{D}(\vec{\sigma})$.
    \item For each bit in the correction $\vec{c}$, apply a Pauli $Z$- or $X$-type correction to a single qubit in each row or column corresponding to the classical bit.
\end{itemize}
\end{algo}
The time complexity of the induced decoder consists of the time to construct the stabilizer values and the time to run the classical decoder $\mathcal{D}$. Given an $[n,k,d]$ classical code $\mathcal{C}$, for a weight-$w$ parity check of the classical code the corresponding stabilizer of the $\text{BBS}(A)$ code is the sum of $O(wn)$ two-qubit gauge measurement. There are $m$ such stabilizers and suppose the classical decoder runs in time at most $t$, then the induced decoder of $\text{BBS}(A)$ takes time $O(mwn+t)$. When classical expander codes are used to construct $\text{BBS}(A)$ and the belief propagation decoder is used as classical decoder $\mathcal{D}$, $m=O(n)$, $w=O(1)$, $t=O(n)$, so the induced decoder runs in time $O(mwn+t)=O(n^2+n)=O(N)$, which is linear in the size of the quantum code.

\subsection{Decoding the SHP codes}
Similar to what we have done for the BBS codes, to decode the SHP codes we have to associate the stabilizers of the quantum code to the parity checks of the classical codes. To illustrate the idea most easily, we assume that the X and Z part of the SHP code are generated by the same $[n,k,d]$ classical code $\mathcal{C}$ with generating matrix $G \in \mathcal{F}_2^{k \times n}$ and parity check matrix $H \in \mathcal{F}_2^{m \times n}$, SHP($H_1, H_2$) = SHP($H$). From Eq.~\eqref{eq:shp_SX} and Eq.~\eqref{eq:shp_SZ} we have 
\begin{align}
    S_Z^{(\text{shp})} &= Z(G \otimes H)\\
                       &= \{ Z(g^T \otimes h):g\in \row{G}, h\in \row{H} \},
\end{align}
\begin{align}
    S_X^{(\text{shp})} &= X(H \otimes G)\\
                       &= \{ X(h^T \otimes g):h\in \row{H}, g\in \row{G}\}.
\end{align}
Since $\rank{G}=k$, the eigenvalues of the quantum stabilizers correspond to exactly $k$ sets of syndromes for the classical code $\mathcal{C}$. In the case of $Z$-type stabilizers, let $g_i$ be the $i$-th row of $G$ for each $i$ such that $1 \leq i \leq k$. A set of syndromes for the classical code $\mathcal{C}$ is generated by measuring the following set of stabilizers
\begin{equation}
    \{ Z(g_i^T \otimes h_j): h_j \in \row{H}, 1 \leq j \leq m \}.
\end{equation}
These $k$ sets of syndromes are passed to the classical decoder and results in $k$ sets of $n$-bit corrections on $\mathcal{C}$. However, in order to apply these $k$ sets of classical corrections canonically onto independent sets of qubits in the SHP code without affecting each other, we have to make sure that the generating matrix $G$ is in the reduced row echelon form, so that the $i$-th set of corrections can be applied on the $i$-th row of qubits lattice.

\begin{algo}[\textbf{The Induced Decoder of $\text{SHP}(H)$}]
Given a $[n,k,d]$ classical code $\mathcal{C}$ with generating matrix $G \in \mathcal{F}_2^{k,n}$ and parity check matrix $H \in \mathcal{F}_2^{m,n}$, the hypergraph subsystem code SHP(H) can be decoded by
\begin{itemize}
    \item Reshape $G$ into its reduced row echelon form \\
    $G = [I_k \,\,B]$.
    \item Collect the $X$- or $Z$-type syndrome $\vec{\sigma}$ of the quantum code $\text{SHP}(H)$.
    \item For each $i \in \{1,2,\ldots,k\}$, the syndrome corresponding to the set of stabilizers $\{ Z(g_i^T \otimes h_j): h_j \in \row{H}, 1 \leq j \leq m \}$ is passed to the classical decoder $\mathcal{D}$, and a $n$-bit correction $\vec{c_i}$ is obtained.
    \item For each set of corrections $\vec{c_i}$, Pauli $Z$- or $X$-type corrections are applied to the qubits $\vec{e_i} \otimes [1,1, \ldots, 1]$, where $\vec{e_i}$ is the $i$-th unit vector.
\end{itemize}
\end{algo}
Hence when using the induced decoder on the quantum code $\text{SHP}(H)$ that is constructed by a $[n,k,d]$ code, we have to run the classical decoder $\mathcal{D}$ a total of $k$ times. The correction consists of $k$ sets of $n$-qubit Paulis have to be applied on the first $k$ rows or columns of the qubit lattice.

The time complexity of the induced decoder for SHP codes again consists of the time to construct the stabilizer values and the time to run the classical decoder $\mathcal{D}$. For a weight-$w$ parity check of the classical code, the corresponding stabilizer of the SHP code is the sum of $O(n)$ number $w$-qubit gauge measurements. There are $O(k \times m)$ stabilizers, and the classical decoder $\mathcal{D}$ needs to be run $k$ times where each run takes time at most $t$, then the induced decoder takes time $O(kmnw+kt)$. When classical expander codes are used to construct the SHP code and the belief propagation decoder is used as the classical decoder $\mathcal{D}$, $m = O(n)$, $k = O(n)$, $w = O(1)$, $t = O(n)$, so the induced decoder runs in time $O(kmnw + kt)=O(n^3+nt)=O(N^{3/2})$.

\subsection{Classical belief propagation decoder}
In the previous two sections we have shown that decoding both the BBS codes and the SHP codes amount to directly decoding the underlying classical code $\mathcal{C}$ that was used to construct the quantum code, and apply the resulting corrections to the appropriate set of qubits in the quantum code. Therefore, in order to maximize the performance of the induced decoding algorithm the best classical decoder should be employed with modifications to tolerate measurement noise. Sipser and Spielman have analyzed the flip decoder \cite{sipser1996expander, spielman1996linear} for classical expander codes and in the scenario that the parity checks are noisy in addition to the bits. A quantum version of the classical flip decoder has been shown to decode the quantum expander codes efficiently \cite{leverrier2015quantum, fawzi2018efficient, fawzi2018constant, leverrier2015quantum}.

However, when classical LDPC codes and expander codes are considered, various iterative message-passing decoding algorithms have been shown to result in codes with rate approaching the Shannon capacity together with efficient decoding algorithm (see e.g.~\cite{richardson2000design}). Message passing algorithms get the name as information is transmitted back and forth between variable and check nodes along the edges of the graph that is used to define the classical code. The transmitted message along an edge is a function of all received messages at the node except for a particular edge. This property ensures that the incoming messages are independent for a tree like graph. Among these well-known decoding algorithms, the belief propagation (BP) decoder, sometimes referred to as Gallager's soft decoding algorithm \cite{Gallager1962}, have been shown to out perform other message-passing algorithms for classical LDPC codes when the binary symmetric channel (BSC) is considered. In this section we briefly describe the BP decoder for classical LDPC codes. For a comprehensive discussion of this area, we point the reader to the book by Richardson and Urbanke \cite{richardson2008modern} and the notes by Guruswami \cite{guruswami2006iterative}, which are excellent resources on this topic.

In particular, here we present the modified BP decoder that uses parity check values as input instead of bit values, in order to simulate the quantum case where data qubit values are not known to decoders. In order to run the BP decoder using parity check values, we add another set of $m$ ``syndrome nodes" $s_j, 1 \leq j \leq m$, that have one-to-one correspondence to the check nodes: syndrome node $s_j$ and check node $j$ are connected by edge $(s_j, j)$. These syndrome nodes $s_i$ are used to store the measured parity check values. Without loss of generality, we assume that the all $0$s message is the correct message to be received.
\begin{algo} [\textbf{Belief Propagation Decoding Algorithm}]
Assuming the probability $p$ for each bit of the incoming message to be flipped is the same, then the log-likelihood ratio $m_i$ of the $i$-th bit is
\begin{equation}
    m_i = \log{\frac{1-p}{p}}.
\end{equation}
For the syndrome nodes $s_i$, we let $m_{s_i} = + \infty$ if the $i$-th syndrome is $0$ and $m_{s_i} = - \infty$ if the $i$-th syndrome is $1$.
Do the following two steps alternatively: 
\begin{enumerate}
    \item \textbf{Rightbound messages}: For all edges $e = (i,j)$, $i \in \{1, 2, \ldots, n\} \cup \{s_1, s_2, \ldots, s_m \}$, do the following: 
    \begin{itemize}
        \item if this is the zeroth round, $g_{i,j} = m_i$.
        \item Otherwise
            \begin{equation}
                g_{i,j} = m_i + \sum_{k \in \mathcal{N}(i) \backslash j}h_{i,k}
            \end{equation}
            where $\mathcal{N}(i)$ denotes the set of neighbors of node $i$. 
    \end{itemize}
     The variable node $i$ sends the message $g_{i,j}$ to check node $j$.
     
     \item \textbf{Leftbound messages}: For edges $e = (i,j)$, $i \in \{1, 2, \ldots, n \}$ do the following:
     \begin{equation}
         h_{i,j} = f \left( \prod_{k \in \mathcal{N}(j) 
        \backslash i} \frac{e^{g_{k,j}}-1}{e^{g_{k,j}}+1} \right), \,\,\,\, f(u) = \log{\frac{1+u}{1-u}}.
     \end{equation}
     The check node $j$ sends the message $h_{i,j}$ to node $i$.
\end{enumerate}
At each step we can determine the current variable node values $v_i$ given their updated log-likelihood ratios: $v_i = 0$ if $m_i > 0$ and $v_i = 1$ if $m_i < 0$. The above iterative step terminates when all check nodes are satisfied based on the current $v_i$, or the predetermined number of iterations is reached. The variable node value $v_i$ at the final step is used as correction for the noisy channel output $b_i$.

\label{algo:BP}
\end{algo}
If the graph considered has large enough girth when compared to the number of iterations of the algorithm, the messages at each iteration would approach the true log-likelihood ratio of the bits given the observed values. By applying expander graph arguments to message passing algorithms it has been shown that the BP decoding algorithm can correct errors efficiently, with time linear in the block size \cite{burshtein2001expander}. Therefore the belief propagation decoder is a good candidate for decoding the BBS and SHP codes constructed using classical LDPC codes.

\subsection{Handling measurement errors with BP decoder}
As we mentioned previously, decoding algorithms for classical codes usually do not consider the problem of measurement noise. In previous studies of the BP decoder, no explicit proposals have been made regarding handling measurement noise when decoding classical expander codes. In order to use the BP decoder to decode the BBS and SHP codes as part of the induced decoder, modifications have to be made in order to tolerate measurement errors on parity check measurements.

When given a classical code $\mathcal{C}$ with $n$ variable nodes and $m$ check nodes, in addition to what we have done in Algorithm \ref{algo:BP} we add $m$ variable nodes to the graph so that each of them has a one-to-one correspondence with the $m$ check nodes: variable node $n+j$ is connected to check node $j$ via edge $(n+j, j)$. These additional variable nodes are used to represent measurement errors on the parity checks. An example of the modified graph for decoding a classical linear code of block length $6$ is shown in FIG. \ref{fig:BP_decoding}. In the binary symmetric channel, let $p$ be the probability that a bit is flipped and let $q$ be the probability that a measurement is flipped. We define the log-likelihood ratio $m_i$ for the $n+m$ variable nodes as:
\begin{itemize}
    \item 
        $m_i = \log{\frac{1-p}{p}}, 1 \leq i \leq n$,
    
    \item 
        $m_i = \log{\frac{1-q}{q}}, n+1 \leq i \leq n+m$.
\end{itemize}
For the syndrome nodes $s_i$, we let $m_{s_i} = + \infty$ if the $i$-th syndrome is $0$ and $m_{s_i} = - \infty$ if the $i$-th syndrome is $1$.

When executing the belief propagation decoding algorithm, the normal message passing process is executed as described in Algorithm \ref{algo:BP}. For the added syndrome node $s_i$, they send their log-likelihood ratios $m_i$ to the associated check node $j$ with message $g_{i,j} = m_i$ during the rightbound messages phase in each iteration, but there will be no incoming messages from the check nodes $c_i$ to change their values. The algorithm terminates when all check nodes are satisfied or a predetermined number of iterations is reached, and the $n$-bit variable node values at the final step are used as corrections for the noisy data qubits.
\begin{figure}
    \centering
    \includegraphics[width = 0.9\linewidth]{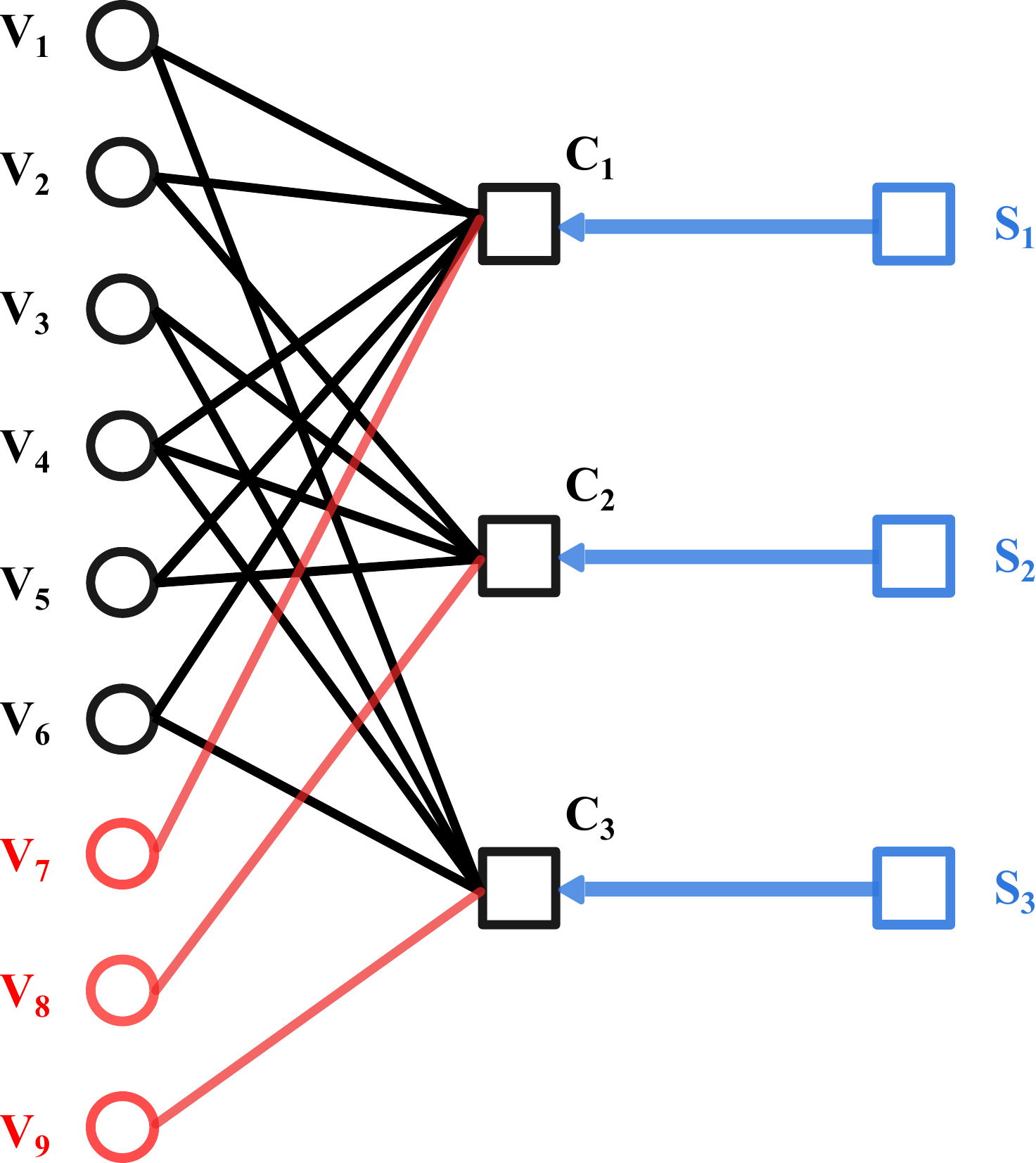}
    \caption{The graph for decoding a classical code of length $6$ using the modified BP decoder that tolerates measurement errors. The syndrome nodes $s_1, s_2, s_3$ are assigned log-likelihood values $\pm \infty$ given the input parity check measurement values $0$ or $1$.}
    \label{fig:BP_decoding}
\end{figure}

By employing the above described modifications to the BP algorithm, we can efficiently decode the classical expander codes while tolerating measurement errors.

\section{Numerical Simulations and Results}
\label{sec:results}
\begin{figure}
\centering
    \begin{subfigure}[b]{0.45\textwidth}
        \centering
        \includegraphics[width = \linewidth]{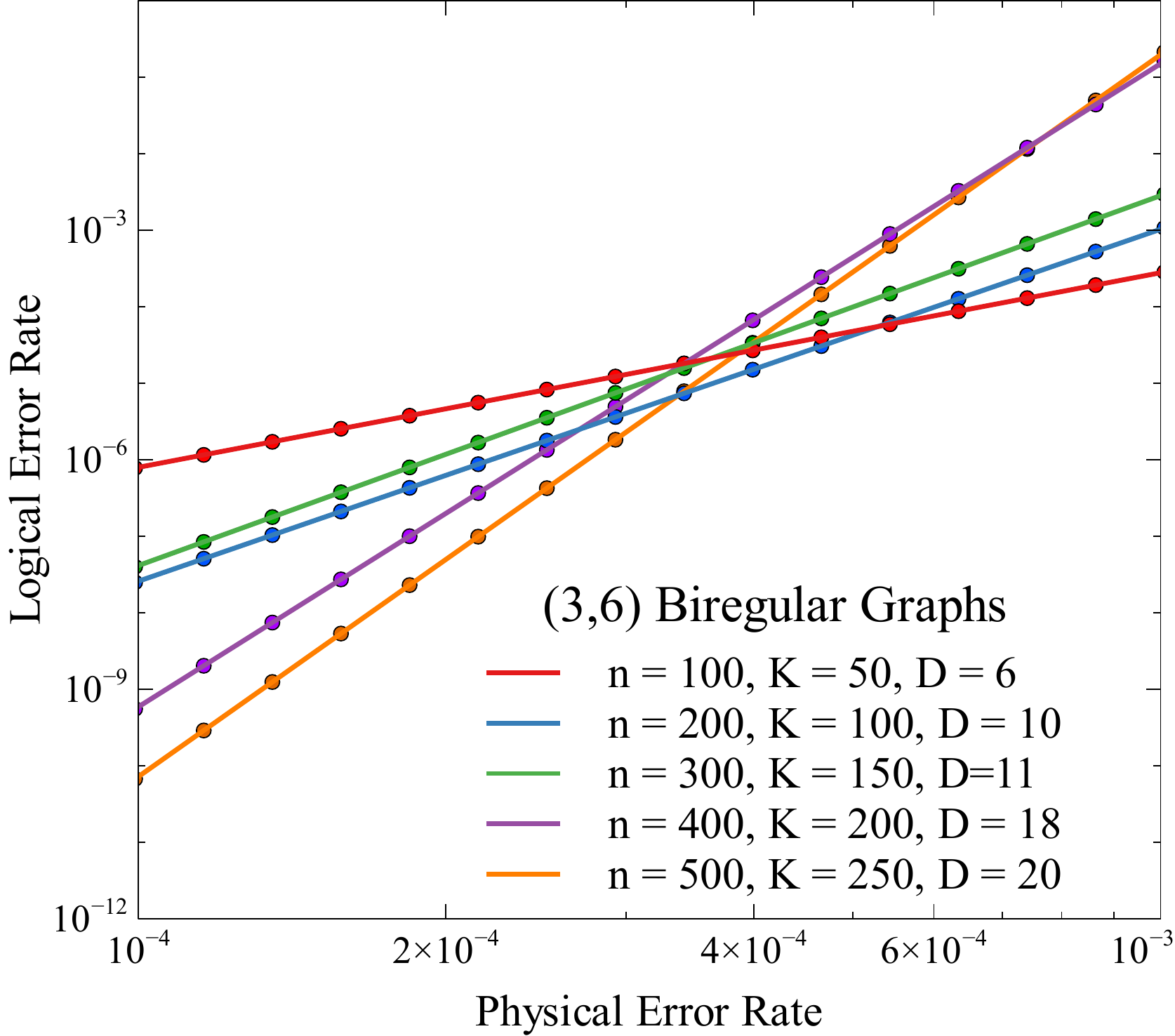}
        \caption{}
    \label{fig:BBS_36}
    \end{subfigure}%
    \hfill%
    \begin{subfigure}[b]{0.45\textwidth}
        \centering
        \includegraphics[width= \linewidth]{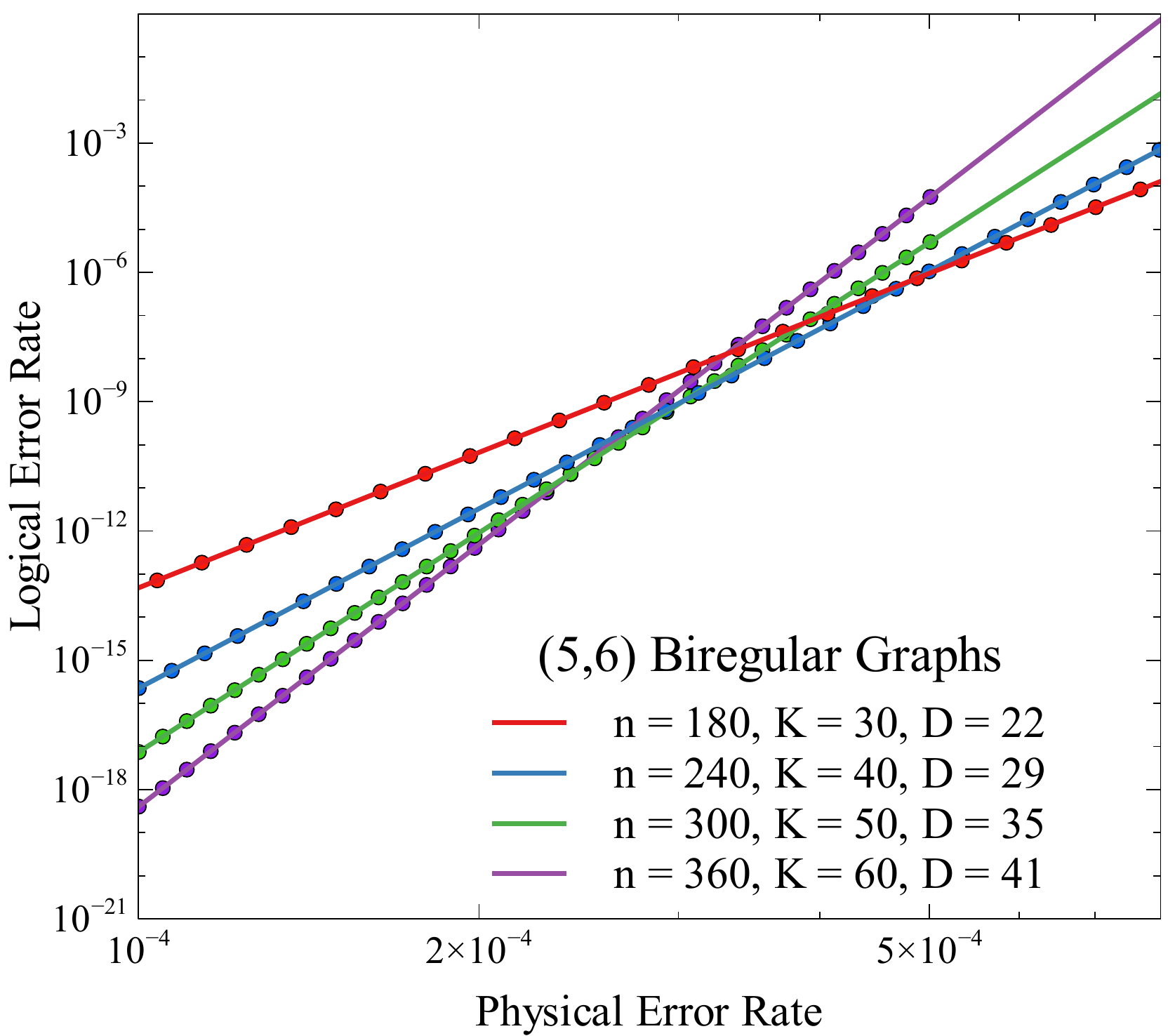}
        \caption{}
    \label{fig:BBS_56}
    \end{subfigure}
\captionsetup{justification=raggedright}
\caption{Simulating the performance of Bravyi-Bacon-Shor codes constructed using (a) $(3,6)$- and (b) $(5,6)$-biregular bipartite graphs. BBS codes by $(5,6)$ graphs outperforms BBS codes by $(3,6)$ graphs due to superior performance of classical $(5,6)$ codes.}
\label{fig:BBS}
\end{figure}

In this section we present numerical results of decoding the BBS and SHP codes using the induced decoders instantiated with the modified BP decoder that handles measurement errors. All simulations are done under the phenomenological error model, where given probability $p$, random single-qubit bit or phase flip errors of the form $E_{1q} = \{\sqrt{1-p}I, \sqrt{p}X\}$ or $E_{1q} = \{\sqrt{1-p}I, \sqrt{p}Z\}$ are applied independently on qubits and measurements output the wrong (opposite) value with probability $p$. There is no circuit-level error propagation in the simulation. 

In order to maximize the parameters and performance of the quantum codes when decoded by the induced decoders, we construct the BBS and SHP codes with classical regular LDPC codes defined by biregular bipartite graphs. To obtain symmetric performance for $X$- and $Z$-type errors, both $X$ and $Z$ part of each quantum code are constructed with the same classical LDPC code. Since both the BBS and SHP codes are defined as CSS codes, $X$- and $Z$-type errors can be decoded separately using the induced decoder. Hence in the rest of the paper we assume that each qubit independently suffers from Pauli $X$- and $Z$-type errors as described in the previous paragraph, and study the performance of these codes by plotting the average logical error rate per logical qubit of the $K$-qubit block versus the physical error rate of each qubit. Using this metric allows us to directly compare the average performance of quantum codes with different encoding rates on an equal footing, instead of comparing large blocks with vastly different numbers of encoded qubits. By doing so we are taking into account both the performance and encoding rate when comparing different codes, but to some extent ignoring the potential correlation between logical errors.

The classical regular LDPC codes that are used to construct the BBS and SHP codes were randomly generated biregular bipartite graphs using the configuration model \cite{richardson2008modern}. It can be shown that asymptotically these graphs will have a good expansion coefficient, making them classical expander codes with good performance. For each of the selected block size, we randomly generated $1000$ biregular bipartite graphs with specified node degrees and simulated their performance under the binary symmetric channel. The best-performing classical code is chosen to construct the quantum code. Since the induced decoder for the quantum code directly decodes on the underlying classical code, a relatively good classical code implies a relatively good quantum code.

\begin{figure}
    \centering
    \includegraphics[width = 0.9\linewidth]{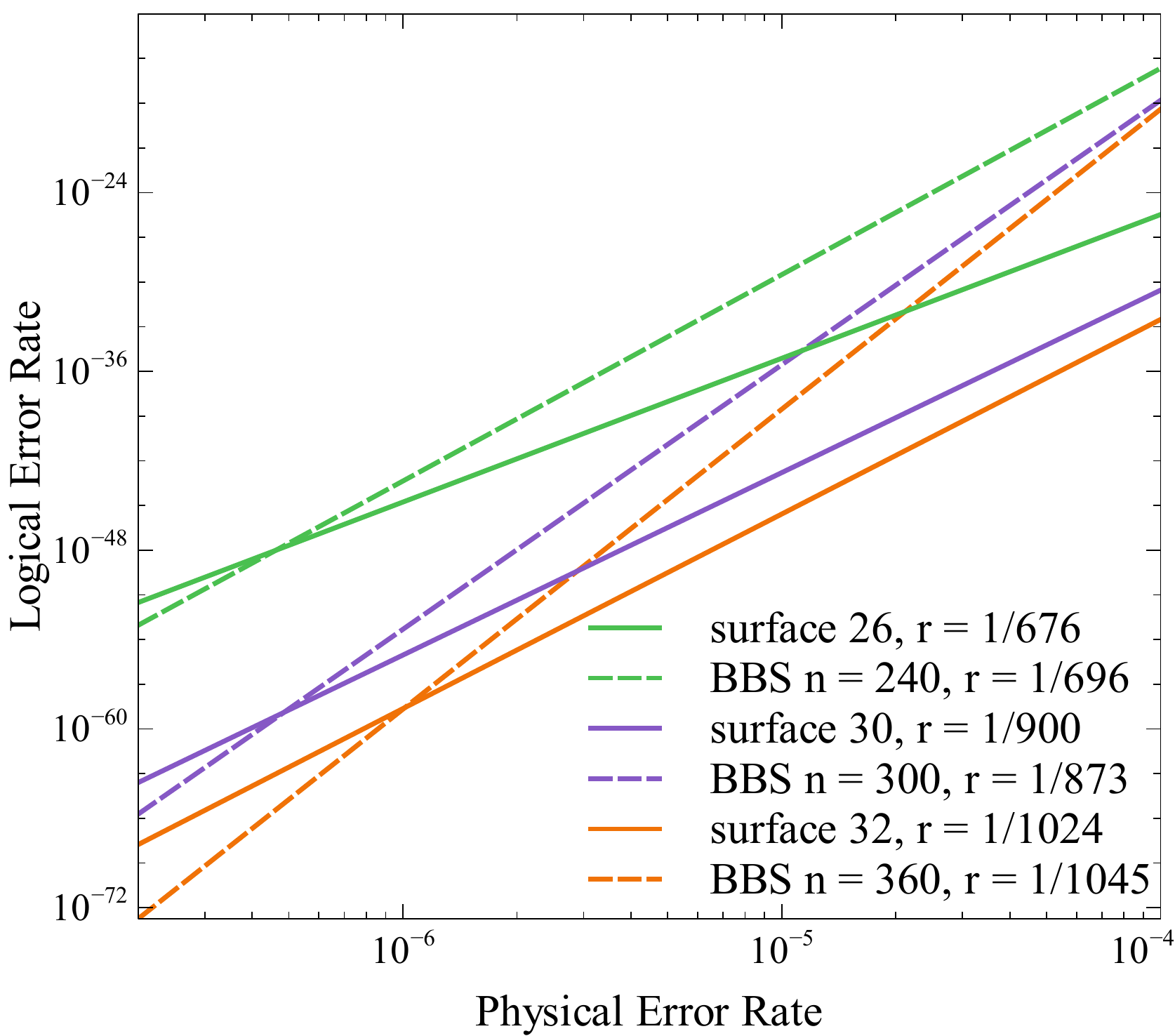}
    \caption{Comparing the average error rate per logical qubit of the BBS codes constructed with (5,6)-biregular bipartite graphs of block size $240, 300, 360$ to surface codes of sizes $26 \times 26, 30 \times 30, 32 \times 32$. }
    \label{fig:BBS_VS_SF}
\end{figure}

We studied two classes of graphs for generating classical LDPC codes: the $(3,6)$- and $(5,6)$- biregular bipartite graphs, which we will refer to as the $(3,6)$ and $(5,6)$ codes. By simulating the performance of these two classical codes with the BP decoder, we observed that the $(5,6)$ codes significantly outperform the $(3,6)$ codes, which agrees with previous studies in classical coding theory \cite{richardson2001capacity, mackay1999good}. Given a $(b,c)$ code of size $n$, the number of encoded bits is $k=\frac{c-b}{b}n$ and the encoding rate for the classical code is $\frac{c-b}{b}$. Hence the BBS codes constructed with $(b,c)$ classical code have parameters $\llbracket N_{BBS}, K_{BBS} \rrbracket = \llbracket O(n^2), \frac{c-b}{b} n \rrbracket$, and the SHP codes have parameters $\llbracket N_{SHP}, K_{SHP} \rrbracket = \llbracket n^2, (\frac{c-b}{b})^2 n^2 \rrbracket$.

In all plots, $n$ is the number of bits/variable nodes for the classical LDPC code, $N$ is the number of physical qubits in the quantum code, $K$ is the number of encoded logical qubits in the quantum code, $D$ is the average distance of the quantum code found through fitting the simulated data to $P_L=A p^D$, and $r$ is the encoding rate of the quantum code. The numerical performance of the BBS codes presented in Figures \ref{fig:BBS}, \ref{fig:BBS_VS_SF} and \ref{fig:BBS_VS_SHP} are obtained using importance sampling to error rates as low as $10^{-4}$, and best-fit lines are plotted in order to extrapolate the codes behavior to low error regimes. Details of importance sampling can be found in \cite{LiBareAnc2017}. The numerical performance of the SHP codes presented in Figures \ref{fig:SHP} and \ref{fig:BBS_VS_SHP} are obtained through Monte Carlo simulations at various physical error rates.

From FIG. \ref{fig:BBS} we can see that the BBS codes constructed with $(5,6)$ codes have significantly better performance than that with $(3,6)$ codes, as expected given the results on the classical codes. It is clear from FIG. \ref{fig:BBS} that the BBS codes do not have a fault-tolerant threshold, due to the fact that the weight-$2$ gauge operators in the quantum code result in a superexponential scaling of number of weight-$D$ dressed logical operators. A similar behavior is observed for the SHP codes constructed with $(5,6)$ codes, as shown in \ref{fig:SHP}, where the SHP codes also do not exhibit a fault-tolerant threshold. 

\begin{figure}
    \centering
    \includegraphics[width = 0.9\linewidth]{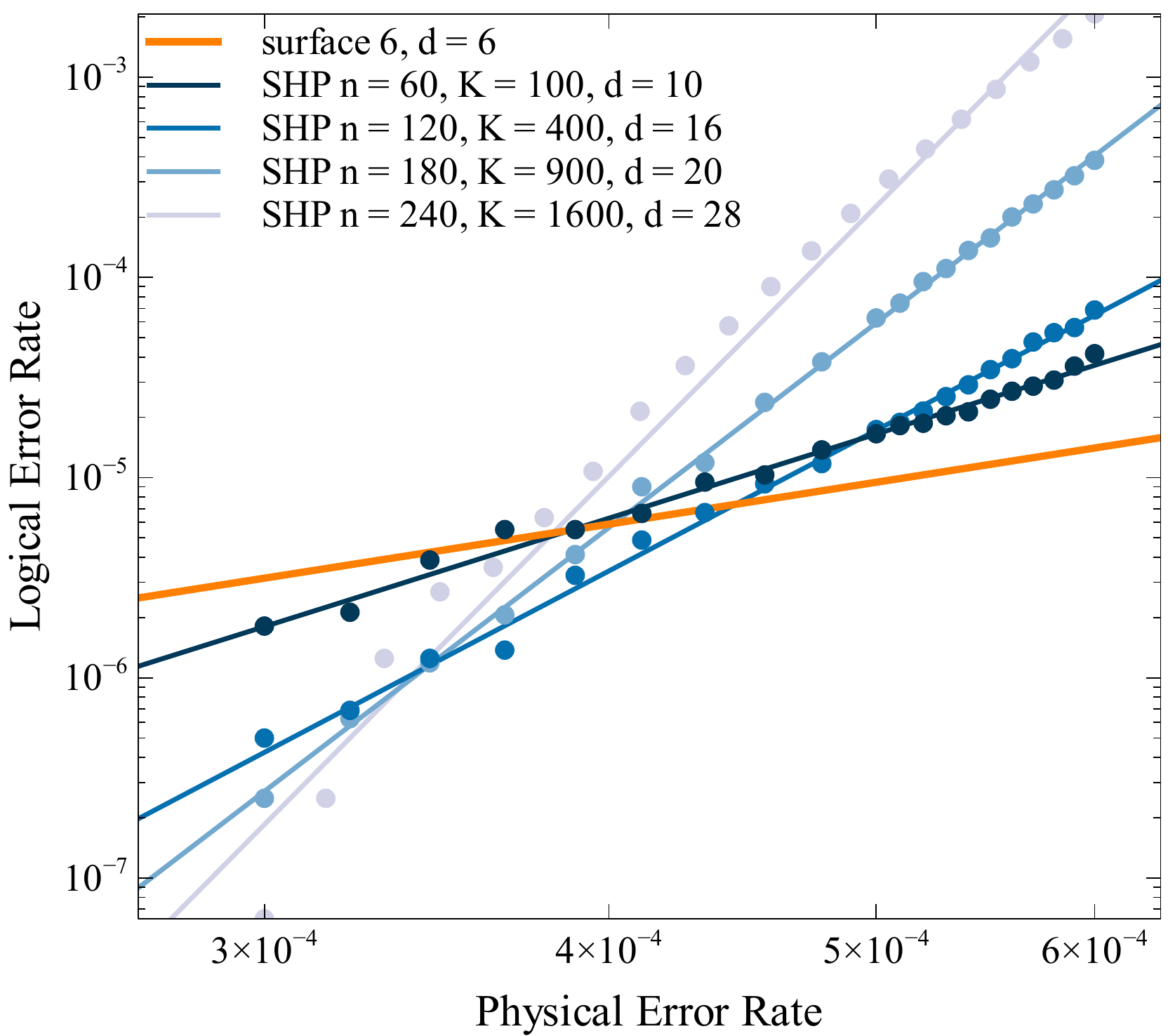}
    \caption{Simulating the performance of the SHP codes constructed using $(5,6)$-biregular bipartite graphs. Their average performance per logical qubit are compared to the size $6 \times 6$ surface code. All codes in this plot have encoding rate $1/36$.}
    \label{fig:SHP}
\end{figure}

\begin{figure}
    \centering
    \includegraphics[width = 0.9\linewidth]{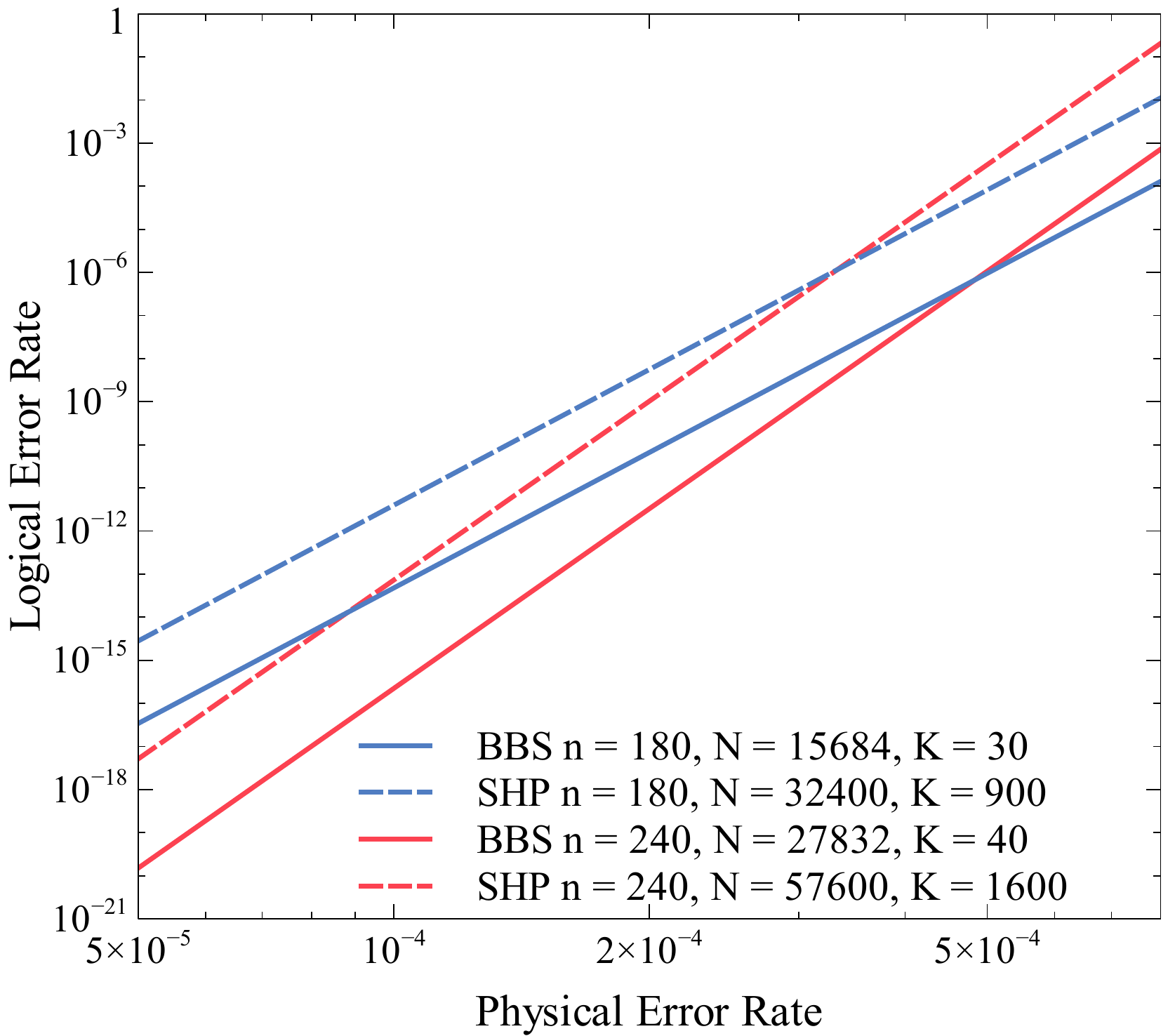}
    \caption{Comparing the average performance per logical qubit of the BBS codes to the SHP codes. The BBS and SHP codes with the same $n$ are constructed using the same $(5,6)$-biregular bipartite graph.}
    \label{fig:BBS_VS_SHP}
\end{figure}

To benchmark the performance of the BBS codes and SHP codes, we compare them to the surface codes as well as to each other. As previously mentioned, in order to obtain a reasonable comparison we compare the BBS and SHP codes against surface codes of similar encoding rate $r$ by comparing the average error rate of single logical qubits within the same code block to the logical error rate of the surface code. In FIG. \ref{fig:BBS_VS_SF} we are comparing the average logical error rate per logical qubit of the BBS codes constructed with $(5,6)$ codes of sizes $n=240, 300, 360$ to surface codes of sizes $26\times 26, 30 \times 30, 32 \times 32$. Note that the surface code results are simulated using the Union Find decoder \cite{delfosse2017almost}, so that we are comparing a linear time decoding algorithm of the BBS codes to a linear time decoding algorithm of the surface code. The BBS codes have better distances than surface codes of similar encoding rates, but they only outperform the surface codes for physical error rates below $10^{-6}$. Similar results for SHP codes are shown in FIG.\ref{fig:SHP}. Since the SHP codes have constant encoding rates and when the $(5,6)$ codes are used, the resulting encoding rate is $r=1/36$, so we are comparing the average error rate for single logical qubits in the SHP codes to a single $6 \times 6$ block of surface code. The SHP codes can have significantly better distance than surface code of the same encoding rate, but they do not outperform the surface code until physical error rates $p \leq 4 \times 10^{-4}$.

Finally, we compare the average performance per logical qubit of the BBS and SHP codes, as shown in \ref{fig:BBS_VS_SHP}. The comparison is made between BBS and SHP codes constructed using the exact same $(5,6)$-biregular bipartite graph. While it seems that the SHP codes' average logical qubit performance is slightly worse than that of the BBS codes, bear in mind that the SHP codes have much higher encoding rate.

\section{Conclusion}
We studied two different constructions of quantum subsystem error-correcting codes using classical linear codes: the Bravyi-Bacon-Shor (BBS) codes and the subsystem hypergraph product (SHP) codes. We reviewed the BBS codes that was introduced in a previous paper \cite{yoder2019optimal}, and presented a construction of the SHP codes that can be viewed similar to the hypergraph product codes \cite{tillich2013quantum}. We proposed efficient algorithms to decode the BBS and SHP codes while handling measurement errors by using a modified belief propagation decoder for classical expander codes. We studied the numerical performance of the BBS and SHP codes, and showed that while these codes do not have a fault-tolerant threshold, they have very good distance scaling and encoding rates. 

When constructed using classical expander codes, the BBS codes have encoding rates $O(1/\sqrt{N})$ and the SHP codes have constant encoding rates dependent on the expander code parameters. Suppose the same classical expander code is used, the resulting SHP codes have even higher encoding rates than the hypergraph product codes. Hence for large block sizes these codes could offer significant savings in terms of resource overhead when trying to achieve a specific logical error rate. It is worth noting that while we are already observing very good logical performance by simulating codes constructed with small biregular bipartite graphs, classical LDPC codes asymptotically become better expander codes and the belief propagation decoder will give a much better performance for expander codes of larger block sizes.

Therefore, the BBS and SHP codes are worth studying for the purpose of large scale quantum error correction. Future studies on these codes could include investigating the potential of using large irregular LDPC codes to construct the BBS and SHP codes in order to achieve better logical performances, tailoring the quantum code for biased noise models by using two different classical codes to construct asymmetric BBS and SHP codes, and methods to apply fault-tolerant logical operations within the same code block.

\section{Acknowledgement}
The authors gratefully acknowledge Andrew Cross, Leonid Pryadko, Ken Brown, Michael Newman and Dripto Debroy for helpful discussions. T.~Yoder also thanks the IBM Research Frontiers Institute for partial support. M.~Li thanks the IBM graduate internship program and the National Science Foundation Expeditions in Computing award 1730104 for support.

\bibliography{references.bib}

\end{document}